\newtheorem{theorem}{\textbf{Theorem}}
\newtheorem{lemma}{\textbf{Lemma}}
\newtheorem{corollary}{\textbf{Corollary}}
\newtheorem{remark}{\textbf{Remark}}
\newtheorem{definition}{\textbf{Definition}}
\newenvironment{proof}{{{\bf Proof:}}}{\hfill $\square$\par}
\journal{Elsevier}
\begin{document}
\begin{frontmatter}




\title{Data-Driven Analysis and Predictive Control of Descriptor Systems with Application to Power and Water Networks}


		
		{\small{	
        \author[bit]{Yuan Zhang}\ead{zhangyuan14@bit.edu.cn}    
	\author[bit]{Yu Wang}\ead{3220241197@bit.edu.cn}               
	\author[tj]{Jun Shang}\ead{shangjun@tongji.edu.cn}
	\author[bit]{Yuanqing Xia}\ead{xia\_yuanqing@bit.edu.cn} 
        \author[bit]{Jinhui Zhang*}\ead{zhangjinh@bit.edu.cn (corresponding author)}                      
                }} 

\affiliation[bit]{organization={School of Automation, Beijing Institute of Technology},
            city={Beijing},
            postcode={100081}, 
            country={China }}
            
\affiliation[tj]{organization={Department of Control Science and Engineering, Shanghai Institute of Intelligent Science and Technology, State Key Laboratory of Autonomous Intelligent Unmanned Systems, and Frontiers Science Center for Intelligent Autonomous Systems, Tongji University},
            city={Shanghai},
            postcode={200092}, 
            country={China}}
  \begin{abstract}  Despite growing interest in data-driven analysis and control of linear systems, descriptor systems—which are essential for modeling complex engineered systems with algebraic constraints like power and water networks—have received comparatively little attention. This paper develops a comprehensive data-driven framework for analyzing and controlling discrete-time descriptor systems without relying on explicit state-space models. We address fundamental challenges posed by non-causality through the construction of forward and backward data matrices, establishing data-based sufficient conditions for controllability and observability in terms of input-output data, where both R-controllability and C-controllability (R-observability and C-observability) have been considered. We then extend Willems' fundamental lemma to incompletely controllable systems. These methodological advances  Data-Enabled Predictive Control (DeePC) to achieve output tracking in descriptor systems and to maintain performance under incomplete controllability conditions, as demonstrated in two case studies: i) Frequency regulation in an IEEE 9-bus power system with $3$ generators, where DeePC maintained the frequency stability of the power system despite deliberate violations of R-controllability, and ii) Pressure head control in an EPANET water network with $3$ tanks, $2$ reservoirs, and $117$ pipes, where output tracking was successfully enforced under algebraic constraints.  
\end{abstract}

    

    
    
    

\begin{keyword}
Data-driven analysis and control \sep descriptor systems \sep controllability and observability \sep predictive control \sep Willems' fundamental lemma
%

\end{keyword}

\end{frontmatter}



\section{Introduction}
    Controllability and observability, as core properties in system and control theory, play a critical role in analyzing dynamic behavior and designing effective control strategies \cite{kalman1960contributions}. Traditional control theory primarily addresses linear systems, which offer a straightforward mathematical framework and a well-developed theoretical foundation \cite{mishra2020data}. These systems have found broad applications in fields like industrial control and signal processing. However, real-world engineering systems often exhibit complexities, such as algebraic constraints, time-delay coupling, and multi-modal interactions that lead to singular models. In contrast to standard linear systems, descriptor systems (or singular systems), especially discrete-time descriptor systems, leverage differential-algebraic equations to more comprehensively capture the hybrid dynamic characteristics of complex systems, demonstrating greater applicability in areas like power networks, traffic dispatching, chemical processing and manufacturing applications, and economic modeling \cite{Dai1989SingularCS,wu2013reliable,jafari2020observer}.		
		
    Although remarkable progress has been made for descriptor systems, most of the existing results rely on accurate model parameters.  In practice, modeling uncertainties and parameter perturbations often limit the effectiveness of traditional model-based methods. In contrast, data-driven approaches directly use input, output, or state data without system identification, which significantly reduces the dependence on prior model information and provides a new idea for the analysis and control of complex systems. For instance, \cite{he2021data} tackled the challenge of analyzing dynamic-algebraic hybrid structures through a data-driven controllability framework, using excitation input experiments to replace explicit model parameters with data matrices derived from input-state trajectories. Yet, the design of controllability and observability criteria for descriptor systems with incomplete state measurements remains unresolved.

     Recent years have witnessed a growing interest in data-driven control methodologies, with Willems' fundamental lemma \cite{willems2005note} serving as a cornerstone. This framework enables direct controller synthesis from experimental data, bypassing explicit system identification. Research advancements span diverse system classes: deterministic linear time-invariant (LTI) systems via non-parametric representations \cite{coulson2019data,tang2023multi}, linear parameter-varying  architectures \cite{verhoek2021fundamental,hamdan2024data}, stochastic LTI formulations \cite{pan2021stochastic}, and generalizations to nonlinear dynamics \cite{han2025big} encompassing polynomial \cite{strasser2020data} and non-polynomial structures \cite{alsalti2021data,putri2025stability}. Such data-driven paradigms facilitate behavioral system analysis \cite{markovsky2021behavioral}, controller design through subspace methods \cite{de2019formulas}, and predictive control implementations \cite{berberich2020data,o2021data,hamdan2024data,putri2025stability}. Given the prevalence of descriptor systems in modeling practical dynamic processes, Schmitz et al. \cite{schmitz2022willems} have extended this framework to linear discrete-time descriptor systems by proposing a tailored fundamental lemma variant. Their analysis reveals that regular descriptor systems require reduced Hankel matrix data dimensions compared to standard LTI cases while maintaining comparable input signal excitation conditions—a critical insight for practical implementations.
     However, their extension requires the descriptor systems to be both R-controllable and R-observable. 
     Currently, it remains unclear how to verify these conditions directly from data, and whether the requirements of R-controllability and persistence of excitation in Willems' fundamental lemma for descriptor systems are necessary. 		

    In response to the aforementioned issues, in this paper, we first undertake a systematic study on the controllability and observability analysis of discrete-time descriptor systems from a data-driven perspective. Based on the rank condition of the input-output data matrix, we establish several sufficient conditions for R-controllability, complete controllability (C-controllability), R-observability, and complete observability (C-observability) of descriptor systems. In particular, to address the non-causality-induced challenges in C-controllability analysis, some forward and backward data matrices are constructed, while only the forward data matrices are needed for non-singular systems. 
        
    Furthermore, we also extend Willems' fundamental lemma to incompletely controllable descriptor systems, eliminating the need for R-controllability in Schmitz et al.'s work  \cite{schmitz2022willems}.  Our approach builds on the data-driven paradigm in \cite{yu2021controllability}, but extends it to descriptor systems. We address this problem by answering the following question: to what extent can a linear combination of a finite number of measured input-output trajectories parameterize all possible trajectories without R-controllability. It is demonstrated that when the excitation is sufficiently persistent, any input-output trajectory of length $L$ with the initial state in a particular subspace can be expressed as a linear combination of a finite number of measurement trajectories. This subspace is composed of the Cartesian product of the sum of an R-controllable subspace, an unobservable subspace of the slow subsystem, and a certain subspace related to the measurement trajectory associated subspaces of the slow subsystem, with the controllable subspace of the fast subsystem. This naturally yields the development of online data-enabled predictive control (DeePC). By collecting a segment of online trajectory to parameterize all controllable trajectories, online DeePC applies to incompletely controllable descriptor systems.   We further use these results for frequency control in power systems and pressure head regulation in water networks. Specifically, we illustrate the performance of online DeePC on a 9-bus power system with $3$ generators and an EPANET water network with $3$ tanks, $2$ reservoirs, $2$ pumps, $92$ junctions, and $117$ pipes. 

    Notation: $\mathbb{R}$, $\mathbb{N}_0$ and $\mathbb{N}$ denote the set of real numbers, the natural numbers with and without zero. The identify matrix in $\mathbb{R}^{n \times n}$ and the zero matrix in $\mathbb{R}^{n \times n}$ are denoted by $I_n$ and $0_{n\times n}$, respectively. Further, for $k\in \mathbb{N}$ let $diag_L(A) = I_L \otimes A$, where $\otimes$ denote the Kronecker product. For a matrix $A\in \mathbb{R}^{m\times n}$, we denote by $rk(A)$, $Im(A)$ and $Ker(A)$ the rank, the image and the right kernel of $A$. When applied to subspaces, we let $+$ and $\times$ denote the sum and Cartesian product operation, respectively. Given a time series $w:[1,T]\rightarrow \mathbb{R}^q$, $w_{[t_1,t_2]}$ is the restriction of $w$ to the finite time interval $[t_1,t_2]\cap {\mathbb N}_0$.
        The associated Hankel matrix with $L \in \mathbb{N}$ block-rows is defined as
		$$
		W_{1,L,T - L + 1} = 
		\begin{bmatrix}
			\begin{smallmatrix}
				w(1) & w(2) & \cdots & w(T - L + 1)\\
				w(2) & w(3) & \cdots & w(T - L + 2)\\
				\vdots & \vdots & \ddots & \vdots\\
				w(L) & w(L + 1) & \cdots & w(T)
			\end{smallmatrix}
		\end{bmatrix}:= H_L(w).
		$$This time series is persistently exciting of order $L$ if the Hankel matrix $H_L(w)$ is full row rank.
\section{Preliminaries}
    We consider discrete-time linear descriptor systems
    \begin{subequations}
            \begin{align}
                Ex(k+1) &= Ax(k) + Bu(k) \label{2_1a} \\
                y(k)    &= Cx(k) + Du(k) \label{2_1b}
            \end{align}
            \label{2_1}
    \end{subequations}with (consistent) initial condition $Ex(0) = x^0$, where $A,E \in \mathbb{R}^{n\times n}$, $B \in \mathbb{R}^{n \times m}$, $C\in \mathbb{R}^{p\times n}$, $D\in \mathbb{R}^{p\times m}$. We assume that $\det(\lambda E - A) \ne 0$ for some $\lambda \in \mathbb{C}$, i.e., regularity of system (\ref{2_1a}). Particularly, we are interested in the case where the matrix $E$ is singular, i.e., $rk(E) < n$. When $k = 0, 1, \dots, L$, we view (\ref{2_1}) as a system of finite time series.		
		
    Since the descriptor system (\ref{2_1a}) is regular, there exist invertible matrices $P,S\in \mathbb{R}^{n\times n}$ such that (cf. \cite{Dai1989SingularCS} and \cite{berger2012quasi})
		$$
		SEP = 						
		\begin{bmatrix}
			I_{q} & 0 \\
			0 & N
		\end{bmatrix},
		SAP =
		\begin{bmatrix}
			A_{1} & 0 \\
			0 & I_{r}
		\end{bmatrix}
		$$
		$$
		SB =						
		\begin{bmatrix}
			B_{1} \\
			B_{2}
		\end{bmatrix},
		CP =
		\begin{bmatrix}
			C_{1} & C_{2}
		\end{bmatrix}
		$$
    where $N\in \mathbb{R}^{r\times r}$ is nilpotent with nilpotency index $s\in\mathbb{N}$, and $A_1\in \mathbb{R}^{q\times q}$, $B_1\in \mathbb{R}^{q\times m}$, $B_2\in \mathbb{R}^{r\times m}$, $C_1\in \mathbb{R}^{p\times q}$, $C_2\in \mathbb{R}^{p\times r}$ with $q+r=n$. Upon introduction of the coordinate change $z=P^{-1}x$, system (\ref{2_1}) can equivalently be written in quasi-Weierstra$\beta$ form, i.e.,		
		\begin{equation}
			\begin{aligned}
				\begin{bmatrix}
					I_{q} & 0 \\
					0 & N
				\end{bmatrix}z(k+1)
				= \begin{bmatrix}
					A_{1} & 0 \\
					0 & I_{r}
				\end{bmatrix}z(k)
				+
				\begin{bmatrix}
					B_{1} \\
					B_{2}
				\end{bmatrix}u(k),
			\end{aligned}
			\label{2_2}
		\end{equation}		
		\begin{equation}
			y(k)=\begin{bmatrix}
				C_{1} & C_{2}
			\end{bmatrix}z(k)+Du(k).
			\label{2_3}
		\end{equation}	
        
     Given an input trajectory $u:\mathbb{N}_{0}\to\mathbb{R}^{m}$ and an initial value $z_{1} ^{0}\in\mathbb{R}^{q}$ there is a unique trajectory such that the state
        $
    	z=
    		\begin{bmatrix}
    			z_{1}^{\top} &	z_{2}^{\top}          \end{bmatrix}^{\top}:\mathbb{N}_{0}\to\mathbb{R}^{q+r}
        $
    satisfies $z_{1}(0)=z_{1}^{0}$. This state $z(k)$, $k\in\mathbb{N}_{0}$, is given by		
	\begin{equation}
			z_{1}(k)=A_{1}^{k}z_{1}(0)+\sum_{m=1}^{k} A_{1}^{k-m}B_{1}u(m-1),
			\label{2_4}
		\end{equation}
            \begin{equation}
			z_{2}(k)=N^{L-k}z_2(L)-\sum_{m=0}^{L-k-1}N^{m}B_{2}u(k+m), k = 0, 1, \dots, L
			\label{2_5}
		\end{equation}
		or
		\begin{equation}
			z_{2}(k)=-\sum_{m=0}^{s-1}N^{m}B_{2}u(k+m), k = 0, 1, \dots, L, \dots.
			\label{2_6}
		\end{equation}
        
    Here, \eqref{2_4} captures the dynamics of the {\emph{slow subsystem} of system \eqref{2_1}, while \eqref{2_5} the {\emph{fast subsystem}}. Since in the fast subsystem, the current state depends on the future inputs,  discrete-time descriptor systems are generically non-causal. 
        
    Next, we review the concepts of R-controllability, R-observability, C-controllability, and C-observability; see \cite{Dai1989SingularCS}.

    For any fixed terminal condition $ z_2(L) \in \mathbb{R}^{r}$, introduce the reachable set $\mathcal{R}(z_2(L))$ for system (\ref{2_1}) with an arbitrary initial state, defined as follows, which will be simplied called the initial reachable set
        $$
        \mathcal{R}(z_2(L)) = \left\{ w \, \bigg| \, 
        \begin{aligned}
        & w \in \mathbb{R}^n, \ \exists\, z_1(0),\ 0 \leq k_1 \leq L, \\
        & u(0), u(1), \ldots, u(L) \text{ such that } x(k_1) = w
        \end{aligned}
        \right\}.
        $$
    Evidently, the initial reachable set $\mathcal{R}(z_2(L))$ depends on $z_2(L) $. For different terminal conditions $ z_2(L) $, the reachable sets $\mathcal{R}(z_2(L))$ may differ.
    \begin{definition}[R-controllability \cite{Dai1989SingularCS}]
        Finite time series (\ref{2_1}) is called controllable in the initial reachable set (R-controllable) if for any fixed terminal condition $z_2(L)$, the state starting from any initial condition can be controlled by inputs to reach any state in $\mathcal{R}(z_2(L))$ at a certain time point.
    \end{definition}
    \begin{definition}[R-observability \cite{Dai1989SingularCS}]
        Finite time series (\ref{2_1}) is called R-observable if it is observable in the initial reachable set $\mathcal{R}(z_2(L))$ for any fixed terminal condition $z_2(L)\in \mathbb{R}^r$.
    \end{definition}
    \begin{lemma}[\cite{belov2018control}]
    System (\ref{2_1}) is R-controllable if and only if
		\begin{equation} \label{2_10}
				rk\begin{bmatrix}
					B_1 & A_1B_1 & \dots & A_1^{q-1}B_1
				\end{bmatrix} = q,
		\end{equation}
    and is R-observable if and only if
		\begin{equation} \label{2_11}
				rk \begin{bmatrix}
					C_1\\
					C_1A_1\\
					\vdots\\
					C_1A_1^{q-1}
				\end{bmatrix}  = q.
		\end{equation}
    \end{lemma}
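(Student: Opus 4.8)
The plan is to exploit the quasi-Weierstra$\beta$ decomposition \eqref{2_2}--\eqref{2_3} already at hand and reduce both claims to standard facts about two ordinary subsystems. In the coordinates $z=P^{-1}x$, system \eqref{2_1} splits into the slow subsystem $z_1(k+1)=A_1z_1(k)+B_1u(k)$, whose solution is \eqref{2_4} and whose contribution to the output is $C_1z_1(k)$, and the fast subsystem, whose state $z_2(k)$ is given explicitly by \eqref{2_5}--\eqref{2_6} and is, once the terminal value $z_2(L)$ is fixed, an affine function of the inputs alone. Since \eqref{2_10} and \eqref{2_11} say precisely that the Kalman controllability matrix of $(A_1,B_1)$ and the Kalman observability matrix of $(C_1,A_1)$ have full rank, the lemma is equivalent to the two assertions ``R-controllable $\Leftrightarrow$ $(A_1,B_1)$ controllable'' and ``R-observable $\Leftrightarrow$ $(C_1,A_1)$ observable'', after which the Cayley--Hamilton theorem collapses the infinite reachability/observability matrices onto the finite ones in \eqref{2_10}--\eqref{2_11}. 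It therefore suffices to prove these two equivalences on a sufficiently long horizon (say $L\ge q+s$).

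For controllability I would first make the initial reachable set explicit in $z$-coordinates. Since $z_1(0)$ is a free parameter and $k_1=0$ is admissible, the slow component of $\mathcal{R}(z_2(L))$ is all of $\mathbb{R}^{q}$; since \eqref{2_5} makes $z_2(k_1)$ a fixed affine function of the inputs once $z_2(L)$ is fixed, the fast component is a union over $k_1$ of affine subsets of $\mathbb{R}^{r}$ built only from $z_2(L)$, $N$ and $B_2$. This gives $P^{-1}\mathcal{R}(z_2(L))=\mathbb{R}^{q}\times\mathcal{Z}_2$ for an explicit set $\mathcal{Z}_2=\mathcal{Z}_2(z_2(L))$. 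Now take any admissible $z_1(0)\in\mathbb{R}^{q}$ and any target $w$ with $P^{-1}w=(\zeta_1,\zeta_2)$ and $\zeta_2\in\mathcal{Z}_2$: membership $\zeta_2\in\mathcal{Z}_2$ fixes a time $k_1$ and forces the inputs $u(k_1),\dots,u(L-1)$ so that $z_2(k_1)=\zeta_2$, while leaving $u(0),\dots,u(k_1-1)$ free; one checks that for a given $\zeta_2$ the admissible $k_1$ either cover $\{0,\dots,L-s\}$ or all exceed $L-s$, so some $k_1\ge q$ is always available when $L\ge q+s$, and controllability of $(A_1,B_1)$ then steers $z_1(0)\to\zeta_1$ with those free inputs, making the full state equal $w$ at time $k_1$. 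For the converse, if $rk\,[\,B_1\ A_1B_1\ \cdots\ A_1^{q-1}B_1\,]<q$ we pick $\zeta_1\notin Im\,[\,B_1\ A_1B_1\ \cdots\ A_1^{q-1}B_1\,]$; from $z_1(0)=0$ the slow state remains in that image for all $k$, while $P(\zeta_1,0)\in\mathcal{R}(0)$, so this point is unreachable and R-controllability fails.

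For observability the argument is shorter. Along any trajectory of the finite time series the inputs and outputs are known, and by \eqref{2_6} the fast state $z_2(k)$ is a known function of the inputs, so $y(k)-Du(k)-C_2z_2(k)=C_1z_1(k)$ is known for every $k$. Hence distinguishing two states of $\mathcal{R}(z_2(L))$ from $(u,y)$ reduces to reconstructing $z_1$ from $\{C_1z_1(k)\}_k$ under identical inputs, i.e.\ to observability of $(C_1,A_1)$: if $(C_1,A_1)$ is observable then $z_1(0)$, hence every $z_1(k)$, hence the full state together with the known $z_2$, is recovered from $q$ consecutive output samples; conversely a nonzero $v\in\bigcap_{k\ge0}Ker(C_1A_1^{k})$ yields two slow initial conditions $z_1(0)$ and $z_1(0)+v$ producing identical $(u,y)$ with both associated states lying in $\mathcal{R}(z_2(L))$, so R-observability fails. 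Cayley--Hamilton reduces $\bigcap_{k\ge0}Ker(C_1A_1^{k})$ to the kernel of the matrix in \eqref{2_11}, which is exactly the rank condition.

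I expect the crux to be the timing/causality bookkeeping in the ``if'' part of the controllability claim: one must verify that the inputs needed to place the non-causal fast state at $\zeta_2$ at time $k_1$ never overlap those used to steer the slow state, and that a valid $k_1\ge q$ is always available inside the horizon --- the moral being that the non-causality of \eqref{2_5} is harmless precisely because $\mathcal{R}(z_2(L))$ has already pre-filtered the admissible fast components. A more routine obstacle is establishing the product structure $P^{-1}\mathcal{R}(z_2(L))=\mathbb{R}^{q}\times\mathcal{Z}_2(z_2(L))$ cleanly, since the definition of $\mathcal{R}$ quantifies simultaneously over $z_1(0)$, $k_1$ and the whole input sequence. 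As an independent cross-check the controllability equivalence can also be routed through the PBH-type test $rk\,[\,\lambda E-A\ \ B\,]=n$ for all finite $\lambda\in\mathbb{C}$: since $\lambda N-I_r$ is invertible at every finite $\lambda$, the Weierstra$\beta$ form gives $rk\,[\,\lambda E-A\ \ B\,]=r+rk\,[\,\lambda I_q-A_1\ \ B_1\,]$, which equals $n$ for all $\lambda$ iff $(A_1,B_1)$ is controllable iff \eqref{2_10} holds; the observability claim is dual.
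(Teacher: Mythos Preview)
The paper does not prove this lemma; it is stated with a citation to \cite{belov2018control} and no argument is supplied. Your proposal is therefore an independent proof, and its overall strategy---reduce R-controllability and R-observability to the Kalman conditions for the slow pair $(A_1,B_1)$, $(C_1,A_1)$ via the quasi-Weierstra{\ss} form, with the fast state absorbed by the fixed terminal datum $z_2(L)$---is the standard one and is correct. The observability half and the PBH cross-check at the end are clean; for the observability direction you should invoke \eqref{2_5} rather than \eqref{2_6}, since the horizon is finite and $z_2(L)$ is the fixed known datum, but the conclusion ($z_2(k)$ determined, hence $C_1z_1(k)$ determined) is unchanged.

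One caution on the controllability half: the product decomposition $P^{-1}\mathcal{R}(z_2(L))=\mathbb{R}^{q}\times\mathcal{Z}_2$ that you assert up front is not true for arbitrary systems. If $(A_1,B_1)$ is uncontrollable and $A_1$ singular, then for a $\zeta_2$ achievable only at some late time $k_1>L-s$ (e.g.\ $\zeta_2=z_2(L)$ when $z_2(L)\notin\mathcal{N}$, achievable only at $k_1=L$), the slow coordinate at that $k_1$ is trapped in $Im(A_1^{k_1})+\sum_{j<k_1}Im(A_1^{j}B_1)\subsetneq\mathbb{R}^{q}$, so $\mathcal{R}$ is not a product. This does not damage your proof, because you only \emph{use} the product structure in the ``if'' direction, where $(A_1,B_1)$ is assumed controllable; there a one-line induction from $Im[A_1\ B_1]=\mathbb{R}^{q}$ (PBH at $\lambda=0$) gives $Im(A_1^{k})+\sum_{j<k}Im(A_1^{j}B_1)=\mathbb{R}^{q}$ for every $k$, and the product structure follows. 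Alternatively, your dichotomy already lets you bypass the product claim entirely: for any target $(\zeta_1,\zeta_2)\in P^{-1}\mathcal{R}(z_2(L))$ the fast coordinate $\zeta_2$ is achievable at some $k_1\ge q$, and controllability of $(A_1,B_1)$ then places the slow coordinate at $\zeta_1$ at that same $k_1$ using the free inputs $u(0),\dots,u(k_1-1)$. Either route closes the argument; just do not present the product form as a general fact.
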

    \begin{definition}[C-controllability \cite{Dai1989SingularCS}]
        Finite time series (\ref{2_1}) is called C-controllable if for any complete condition $\begin{bmatrix}
    	z_1(0)\\ z_2(L)
        \end{bmatrix}$ and $w\in\mathbb{R}^n$ there exists a time point $k_1$, $0\le k_1\le L$, and control inputs $u(0),u(1),\dots,u(L)$ such that $x(k_1) = w$.
    \end{definition}
		\begin{definition}[C-observability \cite{Dai1989SingularCS}]

    A finite time series (\ref{2_1}) is called C-observable if its state $x(k)$ at any time point k is uniquely determined by $\{u(i),y(i), i = 0,1,\dots, L\}$.
    \end{definition}
    \begin{lemma}[\cite{belov2018control}]
    The descriptor system (\ref{2_1}) is \textit{C-controllable} if and only if
		\begin{subequations} \label{2_8}
			\begin{align}
				\text{rank} \begin{bmatrix}
					B_1 & A_1B_1 & \dots & A_1^{q-1}B_1
				\end{bmatrix} = q, \label{2_8a} \\
				\text{rank} \begin{bmatrix}
					B_2 & NB_2   & \dots & N^{r-1}B_2
				\end{bmatrix} = r. \label{2_8b}
				\end{align}
		\end{subequations}
    and is \textit{C-observable} if and only if
		\begin{subequations} \label{2_9}
			\begin{align}
				\text{rank} \begin{bmatrix}
					C_1       \\
					C_1A_1    \\
					\vdots    \\
					C_1A_1^{q-1}
			\end{bmatrix} = q, \label{2_9a} \\
			\text{rank} \begin{bmatrix}
				    C_2     \\
					C_2N    \\
					\vdots  \\
					C_2N^{r-1}
				\end{bmatrix} = r. \label{2_9b}
			\end{align}
		\end{subequations}
    \end{lemma}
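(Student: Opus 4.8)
The plan is to reduce everything to the quasi-Weierstra$\beta$ form \eqref{2_2}--\eqref{2_3}, handle the slow subsystem $z_1(k+1)=A_1z_1(k)+B_1u(k)$ and the fast subsystem $Nz_2(k+1)=z_2(k)+B_2u(k)$ separately, and then transport the conclusions back through the invertible change of coordinates $x=Pz$.

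\emph{C-controllability.} Since $P$ is invertible, steering to an arbitrary $w\in\mathbb{R}^n$ at some $k_1$ is equivalent to steering to an arbitrary pair $(w_1,w_2)\in\mathbb{R}^q\times\mathbb{R}^r$ in $z$-coordinates. By \eqref{2_4}, $z_1(k_1)$ is the usual reachability map of $(A_1,B_1)$ in the inputs $u(0),\dots,u(k_1-1)$, so by the Kalman rank test it can be placed at any $w_1$ (for $q\le k_1\le L$) iff \eqref{2_8a} holds. By \eqref{2_5}, $z_2(k_1)=N^{L-k_1}z_2(L)-\sum_{m=0}^{L-k_1-1}N^mB_2u(k_1+m)$; choosing $k_1\le L-s$ annihilates the boundary term because $N^{L-k_1}=0$, so $z_2(k_1)$ sweeps out exactly $Im\begin{bmatrix}B_2 & NB_2 & \cdots & N^{s-1}B_2\end{bmatrix}$, which coincides with $Im\begin{bmatrix}B_2 & NB_2 & \cdots & N^{r-1}B_2\end{bmatrix}$ since $N^s=0$ and $s\le r$; this image equals $\mathbb{R}^r$ iff \eqref{2_8b} holds. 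The decisive observation is that $z_1(k_1)$ depends only on the inputs strictly before $k_1$ while $z_2(k_1)$ depends only on the inputs from $k_1$ onward, so the two targets can be met simultaneously by choosing the ``past'' and ``future'' input blocks independently; hence \eqref{2_8a}--\eqref{2_8b} are jointly sufficient. Each is necessary: fixing $z_1(0)=0$ (resp.\ $z_2(L)=0$) confines $z_1(k)$ (resp.\ $z_2(k)$) to a proper subspace of $\mathbb{R}^q$ (resp.\ $\mathbb{R}^r$) at every time whenever \eqref{2_8a} (resp.\ \eqref{2_8b}) fails.

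\emph{C-observability.} This is dual. From $y(k)=C_1z_1(k)+C_2z_2(k)+Du(k)$ and \eqref{2_5} one gets $y(k)-Du(k)+\sum_{m=0}^{L-k-1}C_2N^mB_2u(k+m)=C_1z_1(k)+C_2N^{L-k}z_2(L)$; for $k\le L-s$ the last term vanishes and one is left with $q$ pure slow-output equations which, combined with \eqref{2_4}, determine $z_1(0)$ and hence all $z_1(k)$ from $\{u(i),y(i)\}$ iff \eqref{2_9a} holds. Then, using the outputs near the terminal time, $y(L-j)$ together with the now-known $z_1(L-j)$ yields $C_2N^{j}z_2(L)$ up to known terms for $j=0,\dots,r-1$; stacking these is precisely the observability matrix of \eqref{2_9b} applied to $z_2(L)$, which pins down $z_2(L)$ uniquely iff \eqref{2_9b} holds, after which every $z_2(k)$ follows from \eqref{2_5}. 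Necessity is again immediate: an unobservable mode of $(C_1,A_1)$ realized through $z_1(0)$, or a nonzero $\xi$ with $C_2N^j\xi=0$ for all $j$ realized through $z_2(L)$, produces the same output as the zero state. Undoing $x=Pz$ gives the claim for \eqref{2_1}.

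\emph{Expected obstacle.} The only delicate point is the finite-horizon bookkeeping: one needs $L$ large enough that a window of length $\ge q$ of ``slow-only'' equations and a window of length $\ge s$ near the terminal time (for recovering $z_2(L)$) both fit inside $[0,L]$ without interfering — equivalently, for controllability, that a ``past'' block of length $\ge q$ and a ``future'' block of length $\ge s$ fit inside $[0,L]$. Assuming, say, $L\ge q+s$ makes this routine (or one simply states the criterion for the underlying regular pencil, as in \cite{belov2018control}); after that the content is just two Kalman-type rank tests applied blockwise to the quasi-Weierstra$\beta$ decomposition.
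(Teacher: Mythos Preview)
Your proof is correct and follows the standard route: decouple into slow and fast subsystems via the quasi-Weierstra{\ss} form, exploit the crucial fact that $z_1(k_1)$ depends only on past inputs while $z_2(k_1)$ depends only on future inputs (so the two targets can be hit independently), and reduce to two Kalman-type rank tests. The necessity arguments via $z_1(0)=0$ and $z_2(L)=0$ are also fine, and your finite-horizon bookkeeping remark ($L\ge q+s$ suffices) is exactly the right caveat.

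There is nothing to compare against: the paper states this lemma as a known result from \cite{belov2018control} and does not give its own proof. Your argument is essentially the classical one and would be an acceptable in-line justification.
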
	
    \begin{remark}[Controllability conditions]
    The Kalman controllability rank condition in (\ref{2_8a}) is equivalent to 
	\begin{equation}\label{2_12}
			rk\begin{bmatrix}
				A_1-\lambda I & B_1
			\end{bmatrix}=q, \,\ \forall \lambda \in \mathbb{C},
	\end{equation}
    (\ref{2_8b}) is equivalent to 
	\begin{equation}\label{2_13}
		rk\begin{bmatrix}
			N-\lambda I & B_2
		\end{bmatrix}=r, \,\ \forall \lambda \in \mathbb{C}.
	\end{equation}
    This demonstrates the connection between Kalman's rank condition and the PBH criterion.
		\end{remark}


{In our subsequent data-driven analysis, the descriptor state-space parameters $(E, A, B, C, D)$ remain unknown. However, we {assume prior knowledge of} the system dimension $n$, the nilpotency index $s$, and the dimensions of the fast and slow subsystems ($r$ and $q$, respectively). These parameters {are essential for our analysis}, as we {rely solely on input-output data}. {Critically}, multiple descriptor state-space realizations {with differing state dimensions} can {align with} the same input-output data. }
        
\section{Data-driven Controllability Tests}
    Conventional approaches for the controllability analysis of descriptor systems typically involve two steps: first identifying the system model, then performing matrix computations based on model-based criteria. This two-step process can be computationally intensive. This section establishes several sufficient conditions in terms of input-output data for the controllability of descriptor systems.
    

    The set of infinite input-output trajectories of system \eqref{2_1} is given as 
        $$
        {\cal B} = \left\{ (u,y): {\mathbb{N}}_0\to {\mathbb R}^m\times {\mathbb R}^{p} \, \bigg| \, 
        \begin{aligned}
        \exists x: \mathbb{N}_0\to {\mathbb R}^n: x(t), u(t), y(t)\\
        \ {\rm  satisfy} \ \eqref{2_1}\ {\rm for} \ {\rm all}\  t\in {\mathbb N}_0
        \end{aligned}
        \right\}.
        $$
    Throughout this paper, by an input-output trajectory $(U,Y)$ of system \eqref{2_1}, we mean there is an infinite trajectory $(u,y)\in {\cal B}$ such that $U=u_{[t,T]}$ and $Y=y_{[t,T]}$ for some finite time interal $[t,T]\cap {\mathbb N}_0$, i.e., $(U,Y)$ is the restriction of $(u,y)$ to the time interval $\{t,t+1,\cdots,T\}$.
\begin{theorem}\label{thm:R-cont}
    Suppose system (\ref{2_1}) is regular. Let $(U, Y)$ be an input-output trajectory of system (\ref{2_1}). Then, system (\ref{2_1}) is R-controllable if there exists $L \in \mathbb{N}$ such that
        \begin{equation}
		rk\left(\left[\begin{array}{l}
			U_{1, L, T-L+1} \\
			Y_{1, L, T-L+1}
		\end{array}\right]-\lambda\left[\begin{array}{l}
			U_{0, L, T-L+1} \\
			Y_{0, L, T-L+1}
		\end{array}\right]\right)=m(L+s-1)+q
		\label{3_1}
	\end{equation} holds for all $\lambda \in \mathbb{C}$.
\end{theorem}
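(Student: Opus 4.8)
The plan is to prove the contrapositive: assuming system (\ref{2_1}) is \emph{not} R-controllable, I will show that for \emph{some} $\lambda^{\ast}\in\mathbb{C}$ the rank in (\ref{3_1}) is strictly less than $m(L+s-1)+q$, for \emph{every} choice of $L$; this negates the hypothesis and hence proves the claim. By Lemma~1 together with Remark~1 (equivalence of (\ref{2_10}) and (\ref{2_12})), failure of R-controllability gives $rk[\,A_1-\lambda^{\ast} I\ \ B_1\,]<q$ for some $\lambda^{\ast}\in\mathbb{C}$, i.e.\ there is a nonzero row vector $\xi\in\mathbb{C}^{1\times q}$ with $\xi A_1=\lambda^{\ast}\xi$ and $\xi B_1=0$.

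First I would reinterpret the matrix in (\ref{3_1}). Writing the data $(U,Y)$ as the restriction of some $(u,y)\in\mathcal{B}$ and setting $v^u(k):=u(k+1)-\lambda u(k)$ and $v^y(k):=y(k+1)-\lambda y(k)$, an index check shows that its $j$-th column equals $[\,v^u(j-1)^\top\ \cdots\ v^u(j+L-2)^\top\ \ v^y(j-1)^\top\ \cdots\ v^y(j+L-2)^\top\,]^\top$, i.e.\ a length-$L$ window of the signal $(v^u,v^y)$. Since $\mathcal{B}$ is linear and closed under the forward shift, $(v^u,v^y)=\sigma(u,y)-\lambda(u,y)\in\mathcal{B}$, with associated slow state $\zeta(k):=z_1(k+1)-\lambda z_1(k)$. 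Next, using (\ref{2_4})--(\ref{2_6}), any length-$L$ window of an input-output trajectory is the image under a fixed linear map $\Psi$ (built from $A_1,B_1,C_1,C_2,D,N$ and $L$) of the tuple consisting of the slow state at the window's first instant together with $u$ on that instant and the next $L+s-2$ steps, which ranges over $\mathbb{C}^{q}\times\mathbb{C}^{m(L+s-1)}$; the crucial point — and the only place the non-causal fast dynamics (\ref{2_6}) enters — is that $y(k)$ depends on the slow state and on $u(k),\dots,u(k+s-1)$ \emph{only}, the known nilpotency index $s$ keeping this horizon finite. Applying this to the trajectory $(v^u,v^y)$ shows every column of the matrix in (\ref{3_1}) is a $\Psi$-image, so its rank is at most $q+m(L+s-1)$ for \emph{every} $\lambda\in\mathbb{C}$.

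I would then specialize to $\lambda=\lambda^{\ast}$. There $\zeta(k)=z_1(k+1)-\lambda^{\ast}z_1(k)=(A_1-\lambda^{\ast}I)z_1(k)+B_1u(k)$, so $\xi\zeta(k)=\xi(A_1-\lambda^{\ast}I)z_1(k)+\xi B_1 u(k)=0$ for all $k$. Hence at $\lambda=\lambda^{\ast}$ the slow state of $(v^u,v^y)$ never leaves the $(q-1)$-dimensional subspace $\{\zeta:\xi\zeta=0\}$, so every column of the matrix in (\ref{3_1}) lies in $\Psi\bigl(\{\zeta:\xi\zeta=0\}\times\mathbb{C}^{m(L+s-1)}\bigr)$, which has dimension at most $(q-1)+m(L+s-1)<q+m(L+s-1)$. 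Therefore (\ref{3_1}) cannot hold at $\lambda=\lambda^{\ast}$, for any $L$, which is the required contradiction.

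The main obstacle I anticipate is the middle step: establishing the $\Psi$-representation with the \emph{sharp} input horizon $L+s-1$ rather than the naive $L+s$ (which would give only the weaker bound $q+m(L+s)$ and break the argument). This is exactly where the descriptor structure must be exploited — one eliminates $z_2$ completely via (\ref{2_6}) and then carefully bookkeeps which shifts of the input actually occur inside an $L$-block window. A secondary, purely technical point is that $\lambda^{\ast}$ and $\xi$ may be complex; this is immaterial, since all ranks in (\ref{3_1}) are computed over $\mathbb{C}$ while $\Psi$ is a real matrix, so the dimension counts carry over unchanged.
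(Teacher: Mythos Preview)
Your proposal is correct and is essentially the contrapositive of the paper's proof. The paper establishes exactly the factorization you call $\Psi$: it writes
\[
\begin{bmatrix} H_L(u_{[1,T]}) \\ H_L(y_{[1,T]}) \end{bmatrix}
-\lambda
\begin{bmatrix} H_L(u_{[0,T-1]}) \\ H_L(y_{[0,T-1]}) \end{bmatrix}
=
\begin{bmatrix} 0 & [I_{mL}\ 0] \\ O_{sL} & T_L \end{bmatrix}
\begin{bmatrix} H_1(\zeta) \\ H_{L+s-1}(v^u) \end{bmatrix},
\]
where $\zeta(k)=z_1(k{+}1)-\lambda z_1(k)$ and $v^u(k)=u(k{+}1)-\lambda u(k)$ are precisely your difference signals; the left factor is your $\Psi$, with the sharp input horizon $L+s-1$ coming from eliminating $z_2$ via (\ref{2_6}). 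From this the paper argues directly: if (\ref{3_1}) holds then the $\zeta$-block must have rank $q$, and since $\zeta(k)=[A_1-\lambda I\ \ B_1]\,[z_1(k)^\top\ u(k)^\top]^\top$ this forces $rk[A_1-\lambda I\ B_1]=q$. You run the same two ingredients backwards through a left eigenvector $\xi$; the only stylistic difference is your behavioral phrasing ($\sigma(u,y)-\lambda(u,y)\in\mathcal{B}$) in place of the paper's explicit block matrices.
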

\begin{proof}	
    Since the I-O data is invariant with respect to invertible state transformation, assume that system (\ref{2_1}) is given in the quasi-Weierstra$\beta$ form.

    Define 
	\begin{align*}
		O_{sL}&=\begin{bmatrix}
			C_1\\
			C_1A_1\\
			\vdots\\
			C_1A_1^{L-1}
		\end{bmatrix},\\
        T_L&=\begin{bmatrix}
			D - C_2B_2 & -C_2NB_2 & \cdots & 0\\
			\vdots & \vdots & \ddots & \vdots\\
			C_1A_1^{L - 2}B_1 & C_1A_1^{L - 3}B_1 & \cdots & -C_{2}N^{s - 1}B_{2}\\
		\end{bmatrix}.
	\end{align*}   
    
    Note that,
        \begin{align*}
            &\begin{bmatrix}
                H_L(u_{[1,T]})\\
                H_L(y_{[1,T]})
            \end{bmatrix}-\lambda\begin{bmatrix}
                H_L(u_{[0,T - 1]})\\
                H_L(y_{[0,T - 1]})
            \end{bmatrix}\\
            &=\left[\begin{array}{c:c}
                \!0\! & [I_{mL}\,\, 0]\\
                \! O_{sL} \! &  T_L
            \end{array}\right]\begin{bmatrix}
                \! H_1(z_{1[1,T - L + 1]})-\lambda H_1(z_{1[0,T - L]})\!\\
                \! H_{L + s - 1}(u_{[1,T + s - 1]})-\lambda H_{L + s - 1}(u_{[0,T + s - 2]})\!
            \end{bmatrix}.
        \end{align*}
        Therefore,
	\begin{align*}
		&rk\left(\left[\begin{array}{l}H_L(u_{[1, T]}) \\ H_L(y_{[1,T]})\end{array}\right]-\lambda\left[\begin{array}{c}H_L\left(u_{[0, T-1]}\right) \\ H_L\left(y_{[0, T-1]}\right)\end{array}\right]\right) \leq \\ &rk\left(H_1(z_{1[1, T-L+1]})-\lambda H_1\left(z_{1[0,T-L]}\right)\right)+m(L+s-1).
	\end{align*}	    
	To make condition (\ref{3_1}) hold, it must have
	\begin{equation}
		rk\left(H_1(z_{1[1, T-L+1]})-\lambda H_1\left(z_{1[0,T-L]}\right)\right)=q.
		\label{3_2}
	\end{equation}	
	From equation (\ref{2_4}),
        \begin{equation} \label{reason}
	\begin{aligned}
		H_1\left(z_{1[1, T-L+1]}\right)- \lambda H_1\left(z_{1[0, T-L]}\right)= 
		\begin{bmatrix}
			A_1-\lambda I & B_1
		\end{bmatrix}
		\begin{bmatrix}
			H_1(z_{1[0,T-L]})\\
			H_1(u_{[0,T-L]})
		\end{bmatrix}.
	\end{aligned}
        \end{equation}
	Therefore, if (\ref{3_2}) holds, then	
	$$
	rk\begin{bmatrix}
		A_1-\lambda I & B_1
	\end{bmatrix}=q, \,\ \forall \lambda \in \mathbb{C}
	$$	
	which means system (\ref{2_1}) is R-controllable.
\end{proof}
\begin{theorem}\label{thm:cont}
    Suppose system (\ref{2_1}) is regular. Let
	$(U, Y)$ be an input-output trajectory of system (\ref{2_1}). Then, system (\ref{2_1}) is C-controllable if there exists $L \in \mathbb{N}$ such that
	\begin{equation}
		r k\left(\left[\begin{array}{l}
			U_{1, L, T-L+1} \\
			Y_{1, L, T-L+1}
		\end{array}\right]-\lambda\left[\begin{array}{l}
			U_{0, L, T-L+1} \\
			Y_{0, L, T-L+1}
		\end{array}\right]\right)=mL+n
		\label{3_3}
	\end{equation}
	and
	\begin{equation}
		r k\left(\left[\begin{array}{l}
			U_{0, L, T-L+1} \\
			Y_{0, L, T-L+1}
		\end{array}\right]-\lambda\left[\begin{array}{l}
			U_{1, L, T-L+1} \\
			Y_{1, L, T-L+1}
		\end{array}\right]\right)=mL+n
		\label{3_4}
	\end{equation}
	for all $\lambda \in \mathbb{C}$.
\end{theorem}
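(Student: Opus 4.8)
The plan is to mirror the factorization argument used in the proof of Theorem~\ref{thm:R-cont}, but now to extract information about \emph{both} the slow subsystem pair $(A_1,B_1)$ and the fast subsystem pair $(N,B_2)$. As in Theorem~\ref{thm:R-cont}, I would first invoke invariance of the input-output data under invertible state (and equation) transformations to assume system (\ref{2_1}) is in quasi-Weierstra\ss{} form. The key structural fact is the same block factorization as before,
\begin{equation*}
\begin{bmatrix} H_L(u_{[1,T]})\\ H_L(y_{[1,T]})\end{bmatrix}-\lambda\begin{bmatrix} H_L(u_{[0,T-1]})\\ H_L(y_{[0,T-1]})\end{bmatrix}
=\left[\begin{array}{c:c}0 & [I_{mL}\ 0]\\ O_{sL} & T_L\end{array}\right]
\begin{bmatrix} H_1(z_{1[1,T-L+1]})-\lambda H_1(z_{1[0,T-L]})\\ H_{L+s-1}(u_{[1,T+s-1]})-\lambda H_{L+s-1}(u_{[0,T+s-2]})\end{bmatrix},
\end{equation*}
and the analogous identity with the roles of the shifted and unshifted Hankel matrices swapped, which governs (\ref{3_4}). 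From (\ref{reason}) the left factor's contribution from the slow state is $\begin{bmatrix}A_1-\lambda I & B_1\end{bmatrix}$ times a data matrix; forcing the rank in (\ref{3_3}) up to $mL+n = mL + q + r$ (rather than $m(L+s-1)+q$ as in Theorem~\ref{thm:R-cont}) is what additionally pins down the fast part.

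Concretely, I would argue as follows. Condition (\ref{3_3}) holding for all $\lambda$ first forces, exactly as in Theorem~\ref{thm:R-cont}, the slow-subsystem PBH condition $rk\begin{bmatrix}A_1-\lambda I & B_1\end{bmatrix}=q$, i.e.\ (\ref{2_8a}) via (\ref{2_12}). What remains is to show that (\ref{3_3}) \emph{together with} (\ref{3_4}) forces (\ref{2_8b}), equivalently the PBH condition (\ref{2_13}) for $(N,B_2)$. The point of using two conditions is non-causality: the fast state at time $k$ depends on future inputs via (\ref{2_6}), so $z_2(k)=-\sum_{m=0}^{s-1}N^m B_2 u(k+m)$. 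A rank deficiency in $\begin{bmatrix}B_2 & NB_2 & \cdots & N^{r-1}B_2\end{bmatrix}$ means there is a left annihilator $v^\top$ with $v^\top N^j B_2=0$ for all $j$; I would track how such a $v$ produces a corresponding left-null vector of the appropriate data matrix. Because $N$ is nilpotent, the "backward" recursion needed to reconstruct $z_2$ from the data is exactly the one encoded by swapping shifted/unshifted Hankel matrices, which is why both (\ref{3_3}) and (\ref{3_4}) are needed — one governs propagation of the fast mode in one temporal direction, the other in the reverse, and only their conjunction certifies that no fast direction is invisible to the input.

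The main obstacle will be making the fast-subsystem argument precise: unlike the slow part, where (\ref{reason}) directly exposes $\begin{bmatrix}A_1-\lambda I & B_1\end{bmatrix}$, the fast part enters through the Toeplitz block $T_L$ and the feedthrough-like terms $-C_2 N^j B_2$, so I need to isolate the dependence of the Hankel data on $(N,B_2)$ cleanly. I would do this by writing an explicit expression for $z_{2[\cdot]}$ as in (\ref{2_6}), assembling the relation between $H_L(y)$-type data, the inputs, and the fast reachability matrix $\begin{bmatrix}B_2 & NB_2 & \cdots & N^{s-1}B_2\end{bmatrix}$, and then showing that a common left-null vector of the two data pencils in (\ref{3_3})--(\ref{3_4}) would have to annihilate this reachability matrix — contradicting the rank count $mL+n$. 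A secondary technical point is verifying that the horizontal dimension $T-L+1$ is large enough (persistency of excitation of suitable order) for the factorization's right factor to itself have the rank its block structure permits; I would state this as the implicit requirement that makes the rank conditions (\ref{3_3})--(\ref{3_4}) attainable, exactly as the $L$-dependence is handled in Theorem~\ref{thm:R-cont}.
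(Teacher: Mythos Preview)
Your plan has a genuine gap: reusing the factorization from Theorem~\ref{thm:R-cont} does \emph{not} yield the slow PBH condition ``exactly as in Theorem~\ref{thm:R-cont}'', because the rank target has changed. In that factorization the right factor has $q+m(L+s-1)$ rows, and the bound you get is
\[
mL+n \;\le\; rk\big(H_1(z_{1[1,\cdot]})-\lambda H_1(z_{1[0,\cdot]})\big)+m(L+s-1),
\]
which only gives $rk(z_1\text{-block})\ge q+r-m(s-1)$. When $m(s-1)>r$ this is strictly weaker than $q$, so you cannot conclude $rk[A_1-\lambda I\ \ B_1]=q$. The fast part is worse: once $z_2$ has been eliminated via $z_2(k)=-\sum_j N^jB_2u(k+j)$, the block $T_L$ carries only the Markov parameters $C_2N^jB_2$, and a left annihilator $v^\top$ of the fast reachability matrix need not produce a left-null vector of the data pencil (it is filtered through $C_2$, and the input Hankel rows sit in the same block). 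Your ``main obstacle'' is not just the fast part---the slow argument already breaks.

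The paper's key idea, which you are missing, is to use a \emph{different} factorization that keeps the fast state explicit instead of substituting it out. Writing the backward recursion $z_2(k)=Nz_2(k+1)-B_2u(k)$ from the terminal value $z_2(L-1)$, one obtains
\[
\begin{bmatrix}u_{[0,L-1]}\\ y_{[0,L-1]}\end{bmatrix}
= Q\begin{bmatrix}z_1(0)\\ z_2(L-1)\\ u_{[0,L-1]}\end{bmatrix},
\qquad
Q=\begin{bmatrix}0 & 0 & I\\ O_{sL} & O_{fL} & (\ast)\end{bmatrix},
\]
with $O_{fL}$ the ``backward'' observability block $\big[(C_2N^{L-1})^\top\ \cdots\ C_2^\top\big]^\top$. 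Now the right factor of the shifted data pencil has exactly $q+r+mL=n+mL$ rows, so the rank target $mL+n$ forces each of the three blocks to have full rank simultaneously. Condition (\ref{3_3}) then gives $H_1(z_{1[1,\cdot]})-\lambda H_1(z_{1[0,\cdot]})=[A_1-\lambda I\ \ B_1]\,[\cdots]$ of rank $q$ (slow PBH), while condition (\ref{3_4}) gives $H_1(z_{2[L-1,\cdot]})-\lambda H_1(z_{2[L,\cdot]})=[N-\lambda I\ \ -B_2]\,[\cdots]$ of rank $r$ (fast PBH), using $z_2(k-1)-\lambda z_2(k)=(N-\lambda I)z_2(k)-B_2u(k-1)$. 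Your intuition about forward/backward directions is correct; what is missing is replacing the Theorem~\ref{thm:R-cont} factorization by one whose column count matches the new rank target.
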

\begin{proof}
Again, we resort to the quasi-Weierstra$\beta$ form of system (\ref{2_1}).	
	Define \[
         S=\begin{bmatrix}
		I_{q} \\
		A_{1} \\
		\vdots \\
		A_1^{L-1}
	\end{bmatrix},
	\mathcal{T}=\begin{bmatrix}
		0 & \cdots & 0 & 0 \\
		B_{1} & \cdots & 0 & 0 \\
		\vdots & \ddots & \vdots & \vdots \\
		A_1^{L-2}B_{1} & \cdots & B_{1} & 0
	\end{bmatrix},
	U = \begin{bmatrix}
		N^{L-1} \\ N^{L-2} \\ \vdots \\ I_r
	\end{bmatrix},
	\]
	\[	
    O_{fL}=\begin{bmatrix}
		C_2N^{L-1}\\
		C_2N^{L-2}\\
		\vdots\\
		C_2
	\end{bmatrix}, V = \begin{bmatrix}
		B_2 & NB_2 & \dots & N^{L-2}B_2 & 0\\
		0 & B_2 & \dots & N^{L-3}B_2 & 0\\
		\vdots & \vdots & \ddots & \vdots & \vdots\\
		0 & 0 & \dots & B_2 & 0\\
		0 & 0 & \dots & 0 & 0
	\end{bmatrix}.
    \]    
    We have that
	\begin{equation}
		\begin{bmatrix}
			z_{1}[0,L-1]\\z_{2}[0,L-1]\\u[0,L-1]
		\end{bmatrix}=\left[\begin{array}{c:c:c}
			S & 0 & \mathcal{T} \\
			0 & U & -V \\
			0 & 0 & I
		\end{array}\right]\begin{bmatrix}
		z_{1}(0)\\
		z_{2}(L-1)\\
		u_{[0,L-1]}
		\end{bmatrix} \label{3_5}
	\end{equation} 
	and
	\begin{equation}
		\begin{bmatrix}
			u_{[0,L-1]}\\y_{[0,L-1]}
		\end{bmatrix}=
		\begin{bmatrix}
			0 & 0 & I\\
			diag_{L}(C_1) & diag_{L}(C_2) & diag_{L}(D)
		\end{bmatrix}
		\begin{bmatrix}
			z_{1}[0,L-1]\\z_{2}[0,L-1]\\u[0,L-1]
		\end{bmatrix}.\label{3_6}
	\end{equation}	
    Substituting  (\ref{3_5}) into (\ref{3_6}) yields 
	\begin{equation}
		\begin{bmatrix}
			u_{[0,L-1]}\\y_{[0,L-1]}
		\end{bmatrix}
		=Q\begin{bmatrix}
			z_{1}(0)\\
			z_{2}(L-1)\\
			u_{[0,L-1]}
		\end{bmatrix},
	\end{equation}
    where 	
	\[
	Q = \begin{bmatrix}
			0 & 0 & I \\
			O_{sL} & O_{fL} & diag_{L}(C_1)\
				\mathcal{T}-diag_{L}(C_2)V+diag_{L}(D)
	\end{bmatrix}.
	\]	
    Thus, it's obvious that
    $$
          \begin{bmatrix}
			H_L(u_{[1,T]})\\
			H_L(y_{[1,T]})
		\end{bmatrix}-\lambda\begin{bmatrix}
			H_L(u_{[0,T-1]})\\
			H_L(y_{[0,T-1]}
		\end{bmatrix} = Q\begin{bmatrix}
			\begin{smallmatrix}
			H_1(z_{1[1,T-L+1]}) - \lambda H_1(z_{1[0,T-L]})\\
			H_1(z_{2[L,T]}) - \lambda H_1(z_{2[L-1,T-1]})\\
			H_{L}(u_{[1,T]})-\lambda H_{L}(u_{[0,T-1]})
			\end{smallmatrix}
		\end{bmatrix}.
    $$
    Therefore, 
	{
        \footnotesize
        \begin{align*}
            &rk\left( 
    		\begin{bmatrix} 
    			H_L(u_{[1, T]}) \\ 
    			H_L(y_{[1,T]})
    		\end{bmatrix} 
    		- \lambda 
    		\begin{bmatrix} 
    			H_L(u_{[0, T-1]}) \\ 
    			H_L(y_{[0, T-1]})
    		\end{bmatrix} 
    		\right) \leq 
    		rk\Big( H_1(z_{1[1, T-L+1]}) - \lambda H_1(z_{1[0,T-L]}) \Big) \\ &+ 
    		rk\Big( H_1(z_{2[L,T]}) - \lambda H_1(z_{2[L-1,T-1]}) \Big) + mL.
    	\end{align*}
        }
        Similarly, we have
        \begin{align*}
		& 
		\begin{bmatrix} 
			H_L(u_{[0, T-1]}) \\ 
			H_L(y_{[0,T-1]})
		\end{bmatrix} 
		- \lambda 
		\begin{bmatrix} 
			H_L(u_{[1, T]}) \\ 
			H_L(y_{[1, T]})
		\end{bmatrix} 
		   = Q\begin{bmatrix}
			\begin{smallmatrix}
			H_1(z_{1[0,T-L]}) - \lambda H_1(z_{1[1,T-L+1]})\\
			H_1(z_{2[L-1,T-1]}) - \lambda H_1(z_{2[L,T]})\\
			H_{L}(u_{[0,T-1]})-\lambda H_{L}(u_{[1,T]})
			\end{smallmatrix}
		\end{bmatrix}.
	\end{align*}
	Consequently, the rank constraint satisfies 
        {
        \footnotesize
        \begin{align*}
    		&rk\left( 
    		\begin{bmatrix} 
    			H_L(u_{[0, T-1]}) \\ 
    			H_L(y_{[0,T-1]})
    		\end{bmatrix} 
    		- \lambda 
    		\begin{bmatrix} 
    			H_L(u_{[1, T]}) \\ 
    			H_L(y_{[1, T]})
    		\end{bmatrix} 
    		\right) \leq 
    		rk\Big( H_1(z_{1[0, T-L]}) - \lambda H_1(z_{1[1,T-L+1]}) \Big) \\ &+
    		rk\Big( H_1(z_{2[L-1,T-1]}) - \lambda H_1(z_{2[L,T]}) \Big)  + mL.
    	\end{align*}
        }
    To make conditions (\ref{3_3}) and (\ref{3_4}) hold, it must have 
	$$rk\left(H_1(z_{1[1, T-L+1]})-\lambda H_1\left(z_{1[0,T-L]}\right)\right) = q$$
	and 
	$$
	rk\left(H_1(z_{2[L-1,T-1]}) - \lambda H_1(z_{2[L,T]})\right]=r.
	$$
    Similar to the reasoning in \eqref{reason}, we can get
	$$
	rk\begin{bmatrix}
		A_1-\lambda I & B_1
	\end{bmatrix} = q, \,\ \forall \lambda \in \mathbb{C}
	$$
	$$
	rk\begin{bmatrix}
		\lambda I-N & B_2
	\end{bmatrix}=r,\,\ \forall \lambda \in \mathbb{C}.$$
\end{proof}
\begin{remark}
    In view of \eqref{reason}, condition \eqref{3_3} can be seen as a forward data condition, corresponding to that the current state of the slow subsystem depends on the past state and input. In contrast, condition \eqref{3_4} is a backward data condition, corresponding to that the current state of the fast subsystem depends on the future state and input. 
    Since C-controllability concerns both the controllability of slow and fast subsystems, we need to construct both the forward data matrix and the backward data matrix. 
\end{remark}
\begin{remark}\label{rem:con}
    When the matrix $E$ is set to the identity matrix $(E = I)$, the descriptor system reduces to a standard linear time-invariant (LTI) system, and the fast subsystem vanishes entirely. Then, system (\ref{2_1}) is controllable if there exists $L \in \mathbb{N}$ such that
	\begin{equation*}
		rk\left(\left[\begin{array}{l}
			U_{1, L, T-L+1} \\
			Y_{1, L, T-L+1}
		\end{array}\right]-\lambda\left[\begin{array}{l}
			U_{0, L, T-L+1} \\
			Y_{0, L, T-L+1}
		\end{array}\right]\right) = n + mL.
	\end{equation*}
    Under the additional constraints of a persistently exciting input and a window length $L\ge n$, the sufficient controllability condition in Remark \ref{rem:con} is rigorously strengthened to a necessary and sufficient condition, achieving full equivalence with the data-driven characterization in Theorem 3 of \cite{mishra2020data}. 
\end{remark}
\section{Data-driven Observability Tests}
    As demonstrated by Theorems \ref{thm:R-cont} and \ref{thm:cont}, the data-driven controllability criteria successfully circumvent the dependence on system identification inherent in conventional methods. As a dual problem, observability analysis requires novel data-driven representations. This section first establishes a primary criterion for R-observability in descriptor systems, then extends the framework to complete observability scenarios.	
\begin{theorem}\label{thm:R-obser}
	Suppose system (\ref{2_1}) is regular. Let
	$(U, Y)$ be an input-output trajectory of system (\ref{2_1}). System (\ref{2_1}) is R-observable if there exists $L\ge q$ with $L\in \mathbb{N}$ such that		
	\begin{equation}
		rk\left[\begin{array}{l}
			U_{0, L, T-L+1} \\
			Y_{0, L, T-L+1}
		\end{array}\right]=m(L+s-1)+q.
		\label{4_1}
	\end{equation}
\end{theorem}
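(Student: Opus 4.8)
The plan is to mirror the structure of the proof of Theorem \ref{thm:R-cont}, but now working with the plain data matrix rather than the shifted difference, and drawing the conclusion from an observability rank condition instead of a controllability one. As before, the input-output data are invariant under an invertible state transformation, so I would immediately pass to the quasi-Weierstra$\beta$ form \eqref{2_2}-\eqref{2_3}. The key structural identity I would establish is a factorization of the Hankel matrix of the data: using \eqref{2_4} for the slow part and \eqref{2_6} for the fast part, the column of $\left[U_{0,L,T-L+1}^\top \ Y_{0,L,T-L+1}^\top\right]^\top$ associated to time index $j$ is a linear image of $\left(z_1(j)^\top, u(j)^\top, u(j+1)^\top,\dots,u(j+L+s-2)^\top\right)^\top$. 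Concretely I expect
\[
\begin{bmatrix} U_{0,L,T-L+1}\\ Y_{0,L,T-L+1}\end{bmatrix}
=\left[\begin{array}{c:c} 0 & [\,I_{mL}\ \ 0\,]\\[2pt] O_{sL} & \widetilde T_L \end{array}\right]
\begin{bmatrix} H_1\!\left(z_{1[0,T-L]}\right)\\ H_{L+s-1}\!\left(u_{[0,T-1+s-1]}\right)\end{bmatrix},
\]
with $O_{sL}$ exactly the slow-subsystem observability matrix already defined in the proof of Theorem \ref{thm:R-cont} and $\widetilde T_L$ the appropriate (lower-triangular, Toeplitz-in-blocks) input-to-output map; this is the same block-triangular decomposition used there, only without the $\lambda$-shift.

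Given this factorization, the left factor has $mL+p L$ rows but the middle (coefficient) matrix has only $q + m(L+s-1)$ columns, so the rank of the data matrix is at most $q+m(L+s-1)$; equality in \eqref{4_1} therefore forces the coefficient matrix to have full column rank $q+m(L+s-1)$, and in particular forces its restriction to the ``$z_1$'' block-columns, namely $\left[\begin{smallmatrix}0\\ O_{sL}\end{smallmatrix}\right]$, to be injective once we quotient out the input columns — equivalently $rk(O_{sL})=q$. Since $L\ge q$, the Cayley-Hamilton argument gives $rk(O_{sL})=rk\!\left[\begin{smallmatrix}C_1\\ C_1A_1\\ \vdots\\ C_1A_1^{q-1}\end{smallmatrix}\right]$, and by the R-observability criterion \eqref{2_11} of Lemma 1 this is exactly R-observability of \eqref{2_1}. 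I would spell out the ``quotient out the input columns'' step carefully: from full column rank of $\left[\begin{smallmatrix}0 & I_{mL}\ 0\\ O_{sL} & \widetilde T_L\end{smallmatrix}\right]$ one reads off that any nonzero $v$ with $O_{sL}v$ in the column span of $\widetilde T_L$'s first $mL$-independent... — more cleanly, a vector $(v, 0)$ in the kernel would need $O_{sL}v = 0$ and hence $v=0$, so $O_{sL}$ is injective.

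The main obstacle, as in the companion controllability theorem, is getting the block factorization and its dimensions exactly right: one must track the fast-subsystem contribution, which by \eqref{2_6} reaches $s-1$ steps into the future, so the input Hankel block must have $L+s-1$ block-rows (hence the $m(L+s-1)$ in \eqref{4_1}), and one must check that the columns indexed by $u(0),\dots,u(T+s-2)$ are all available from the given trajectory $(U,Y)$ — this is where the convention that $(U,Y)$ is the restriction of an \emph{infinite} trajectory in ${\cal B}$ is used. A secondary subtlety is that the condition is only sufficient: the argument shows \eqref{4_1} $\Rightarrow$ $rk(O_{sL})=q$, but the converse can fail because the overall data matrix can drop rank even when $O_{sL}$ has full rank (e.g.\ insufficiently exciting input), so I would not attempt to prove necessity. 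Everything else — the explicit form of $\widetilde T_L$, the Cayley-Hamilton reduction from $L$ to $q$ block-rows — is routine and I would state it without grinding through the entries.
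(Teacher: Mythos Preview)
Your proposal is correct and follows essentially the same approach as the paper: factor the data Hankel matrix as $\left[\begin{smallmatrix}0 & [I_{mL}\ 0]\\ O_{sL} & T_L\end{smallmatrix}\right]\left[\begin{smallmatrix}H_1(z_{1[0,T-L]})\\ H_{L+s-1}(u_{[0,T+s-2]})\end{smallmatrix}\right]$, observe that the coefficient matrix has only $q+m(L+s-1)$ columns, and conclude that the rank hypothesis \eqref{4_1} forces full column rank of the coefficient matrix and hence $rk(O_{sL})=q$. Your kernel argument for the last implication and your Cayley--Hamilton remark (reducing from $L$ to $q$ block-rows) are in fact slightly more explicit than the paper's own write-up, which just asserts that the rank equality ``requires that $rk(O_{sL})=q$''.
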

\begin{proof}
    Note that
	$$
	\begin{aligned}
		\begin{bmatrix}
			H_L(u_{[0,T-1]})\\
			H_L(y_{[0,T-1]})
		\end{bmatrix}
		= \left[\begin{array}{c:c}
			0 & [I_{mL}\,\, 0]\\
			O_{sL} & T_L
		\end{array}\right]\begin{bmatrix}
			H_1(z_{1[0,T-L]})\\
			H_{L+s-1}(u_{[0,T+s-2]})
		\end{bmatrix}.
	\end{aligned}
	$$	
	Note that $rk(\left[\begin{array}{c:c}
		0 & [I_{mL}\,\, 0]\\
		O_{sL} & T_L
	\end{array}\right]) \leq m(L+s-1)+q$. In order to make $\begin{bmatrix}
		H_L(u_{[0,T-1]})\\
		H_L(y_{[0,T-1]})
	\end{bmatrix}$ have rank $m(L+s-1)+q$, it must hold that
	$$
	rk\left[\begin{array}{c:c}
		0 & [I_{mL}\,\, 0]\\
		O_{sL} & T_L
	\end{array}\right] = m(L+s-1)+q
	$$		
	which requires that $rk(O_{sL})=q$, i.e., $\left(A_1, C_1\right)$ is observable.
\end{proof}
\begin{remark}
   For a standard LTI system, under the condition that the system is controllable and the input $U$ is persistently exciting of order $n+L$, condition \eqref{4_1} also becomes necessary for the observability \cite{verhaegen2007filtering}. However, for descriptor systems, establishing a necessary and sufficient condition for R-observability is challenging because even when $O_{sL}$ has full column rank, the matrix $\left[\begin{array}{c:c}
		0 & [I_{mL}\,\, 0]\\
		O_{sL} & T_L
	\end{array}\right]$  may be of column rank deficient (see the proof of Theorem \ref{thm:R-obser}). 
\end{remark}
\begin{theorem}\label{thm:obser}
    Suppose that system (\ref{2_1}) is regular. Let
	$(U, Y)$ be an input-output trajectory of system (\ref{2_1}). System (\ref{2_1}) is C-observable if $L\ge \max\{q,r\}$ with $L\in \mathbb{N}$ such that		
	\begin{equation}
		rk\left[\begin{array}{l}
			U_{0, L, T-L+1} \\
			Y_{0, L, T-L+1}
		\end{array}\right] = n+mL.
		\label{4_2}
	\end{equation}
\end{theorem}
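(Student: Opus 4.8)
The plan is to combine the two techniques already used: the slow/fast state decomposition from the proof of Theorem~\ref{thm:cont} and the ``bare stacked Hankel matrix, no pencil variable'' reduction from the proof of Theorem~\ref{thm:R-obser}. As there, I would first pass to the quasi-Weierstra$\beta$ form, which is harmless because the input-output data is invariant under invertible state transformations. Then I reuse the identity behind \eqref{3_5}--\eqref{3_6}: for any length-$L$ window of an input-output trajectory,
\begin{equation*}
\begin{bmatrix} u_{[0,L-1]}\\ y_{[0,L-1]}\end{bmatrix}
= Q\begin{bmatrix} z_1(0)\\ z_2(L-1)\\ u_{[0,L-1]}\end{bmatrix},\qquad
Q=\begin{bmatrix} 0 & 0 & I_{mL}\\ O_{sL} & O_{fL} & \star\end{bmatrix},
\end{equation*}
where $O_{sL}$ and $O_{fL}$ are the truncated observability matrices of the slow and fast subsystems introduced in the earlier proofs and $\star=diag_{L}(C_1)\mathcal{T}-diag_{L}(C_2)V+diag_{L}(D)$ is the feedthrough block from the proof of Theorem~\ref{thm:cont}. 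Sliding this over all shifts gives the factorization
\begin{equation*}
\begin{bmatrix} H_L(u_{[0,T-1]})\\ H_L(y_{[0,T-1]})\end{bmatrix}
= Q\begin{bmatrix} H_1(z_{1[0,T-L]})\\ H_1(z_{2[L-1,T-1]})\\ H_L(u_{[0,T-1]})\end{bmatrix}.
\end{equation*}

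The second step is rank bookkeeping. The matrix $Q$ has $q+r+mL=n+mL$ columns, so $rk(Q)\le n+mL$, and hence the left-hand Hankel stack has rank at most $n+mL$ as well. Therefore, if \eqref{4_2} holds, $Q$ must attain full column rank $n+mL$. I would then read off the consequences from the block-column partition of $Q$: the first $q$ columns equal $\left[\begin{smallmatrix}0\\ O_{sL}\end{smallmatrix}\right]$, so they are independent only if $rk(O_{sL})=q$; the next $r$ columns equal $\left[\begin{smallmatrix}0\\ O_{fL}\end{smallmatrix}\right]$, so they are independent only if $rk(O_{fL})=r$ (the final $mL$ columns are handled by the $I_{mL}$ block).

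The third step converts these to the model-based C-observability test \eqref{2_9}. Since $L\ge q$, Cayley--Hamilton for $A_1$ shows $O_{sL}$ has the same column space as the $q$-block observability matrix in \eqref{2_9a}, so $rk(O_{sL})=q$ is exactly \eqref{2_9a}, i.e. $(A_1,C_1)$ observable. Since $N$ is nilpotent of index $s\le r\le L$, the blocks $C_2N^{L-1},\dots,C_2N,C_2$ of $O_{fL}$ span the same row space as $C_2,C_2N,\dots,C_2N^{r-1}$ (indeed $C_2N^j=0$ for $j\ge s$), so $rk(O_{fL})=r$ is exactly \eqref{2_9b}, i.e. $(N,C_2)$ observable. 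By the C-observability characterization in \eqref{2_9}, system \eqref{2_1} is C-observable.

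The step I expect to need the most care is the transition from ``$Q$ has full column rank'' to the two separate rank conditions, together with the exact role of $L\ge\max\{q,r\}$: one must argue that the identity block in the top-right of $Q$ genuinely decouples the input columns from the state columns, and that $L\ge q$ and $L\ge r$ are precisely what make the Cayley--Hamilton and nilpotency truncations valid (and $O_{fL}$ well defined with enough block rows). A minor but worth-noting subtlety is that \eqref{4_2} is only sufficient: full column rank of $Q$ also forces $\mathrm{Im}(O_{sL})\cap\mathrm{Im}(O_{fL})=\{0\}$, which is strictly stronger than C-observability, so no converse is claimed.
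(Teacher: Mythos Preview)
Your proposal is correct and follows essentially the same route as the paper: factor the stacked Hankel matrix through $Q$ (exactly the matrix from the proof of Theorem~\ref{thm:cont}), observe that $Q$ has $n+mL$ columns so \eqref{4_2} forces $rk(Q)=n+mL$, and then read off $rk(O_{sL})=q$ and $rk(O_{fL})=r$ from the block structure. Your write-up is in fact more careful than the paper's, which omits the explicit Cayley--Hamilton/nilpotency justification for why $L\ge\max\{q,r\}$ suffices and does not spell out the decoupling role of the $I_{mL}$ block.
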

\begin{proof}
    Note that
	$$
	\begin{aligned}
		\begin{bmatrix}
			H_L(u_{[0,T-1]})\\
			H_L(y_{[0,T-1]})
		\end{bmatrix}
		= Q\begin{bmatrix}
			H_1(z_{1[0,T-L]})\\
			H_1(z_{2[L-1,T-1]})\\
			H_{L}(u_{[0,T-1]})
		\end{bmatrix}.
	\end{aligned}
	$$	
    In order to make $\begin{bmatrix}
		H_L(u_{[0,T-1]})\\
		H_L(y_{[0,T-1]})
	\end{bmatrix}$ have rank $n+mL$, it must hold that
        \begin{align*}
		rk \left( \begin{bmatrix}
			\begin{smallmatrix}
			    0 & 0 & I \\
			O_{sL} & O_{fL} & \mathrm{diag}_{L}(C_1)\mathcal{T} - \mathrm{diag}_{L}(C_2)V + \mathrm{diag}_{L}(D)
			\end{smallmatrix}
		\end{bmatrix} \right)
		= n + mL
	\end{align*}		
	which requires that $rk(O_{sL})=q$ and $rk(O_{fL})=r$, i.e., $\left(A_1, C_1\right)$ and $(N, C_2)$ is observable.
\end{proof}
\begin{remark}
    As a corollary of Theorem \ref{thm:obser}, when the matrix $E$ is set to the identity matrix $(E = I)$, the descriptor system reduces to a standard LTI system. Then, system (\ref{2_1}) is observable if $L\ge n$ with $L\in \mathbb{N}$ such that	
    	\begin{equation*}
    		rk\left[\begin{array}{l}
    			U_{0, L, T-L+1} \\
    			Y_{0, L, T-L+1}
    		\end{array}\right] = n+mL. 
    	\end{equation*}
\end{remark}
\section{Extensions of Willems' fundamental lemma to uncontrollable systems} 
    While Willems' fundamental lemma for descriptor systems has been established under the R-controllability assumption in \cite{schmitz2022willems}, this prerequisite may limit its applicability in practical scenarios where the controllability structure is incomplete—rendering the original theorem inapplicable. To bridge this gap, inspired by \cite{yu2021controllability}, this section develops a generalized fundamental lemma by constructing a controllability-independent data-driven framework that reconstructs the system's behavioral characterization. This extension eliminates the need for prior controllability verification, as formalized in the subsequent Theorem \ref{thm:Willems}.

    We will introduce extensions of Willems' fundamental lemma and utilize the following subspaces:
    $$
    \cal{R} = \mathit{Im} \begin{bmatrix}
    	B_1 & A_1B_1 & \dots & A_1^{q-1}B_1
    \end{bmatrix},
    $$
    $$
    \cal{O}=\mathit{Ker}\begin{bmatrix}
    	C_1\\
    	C_1A_1\\
    	\vdots\\
    	C_1A_1^{q-1}
    \end{bmatrix},
    $$
    $$
    \mathcal{N}=\mathit{Im}\begin{bmatrix}
    	B_2 & NB_2 & N^2B_2 & \dots & N^{s-1}B_2
    \end{bmatrix},
    $$
    $$
    \cal{K}\begin{bmatrix}
    	z_{10}
    \end{bmatrix}=\mathit{Im}\begin{bmatrix}
    	z_{10} & A_1z_{10} & \dots & A_1^{q-1}z_{10}
    \end{bmatrix}.
    $$
    We term $\mathcal{N}$ the controllable subspace of the fast subsystem. Using the Cayley-Hamilton theorem, one can verify that (\ref{2_1}) ensures
    \begin{equation}
    	z_1(t)=A_1^tz_1(0)+\sum_{j=0}^{t-1}A_1^{t-j-1}B_1u_j  \in (\cal{R}+	\cal{K}\begin{bmatrix}
    		z_{10}
    	\end{bmatrix}).
    	\label{5_1}
    \end{equation}
    We let $\delta \in \mathbb{N}$ be such that $\delta \ge \delta_{\text{min}} :=degree\quad of\quad  p_{\text{min}}(A_1)$, where $p_{\text{min}}(A_1)$ is the minimal polynomial of matrix $A_1$ \cite{horn2012matrix}, and, due to the Cayley-Hamilton theorem, $\delta_{\text{min}}$ is upper bounded by $q$.
\begin{definition}
    We say a length-L input-output trajectory $(\bar{u}_{[0,L-1]},\bar{y}_{[0,L-1]})$, with $L\in\mathbb{N}$, is parameterizable by ${u_{[0,T-1]},y_{[0,T-1]}}$ if there exists $g\in \mathbb{R}^{T-L+1}$ such that
	\begin{equation}
		\begin{bmatrix}
			\bar{u}_{[0,L-1]}\\
			\bar{y}_{[0,L-1]}
		\end{bmatrix}=\begin{bmatrix}
			H_L(u_{[0,T-1]})\\
			H_L(y_{[0,T-1]})
		\end{bmatrix}g.
		\label{5_2}
	\end{equation}
\end{definition}
\begin{theorem}\label{thm:Willems}
	Suppose that the system (\ref{2_1}) is regular. Let $\{u_{[0,T-1]},z_{{[0,T-1]}},y_{[0,T-1]}\}$ be an input-state-output trajectory generated by the system (\ref{2_1}). If ${u_{[0,T-1]}}$ is persistently exciting of order $(L+\delta+s-1)$ where $\delta$ satisfies $\delta \ge \delta_{\text{min}} :=degree\quad of\quad  p_{\text{min}}(A_1)$, then
	\begin{equation}
		Im\begin{bmatrix}
			H_1(\bar{z}_{[0,T-L-s+1]})\\
			H_{L+s-1}(\bar{u}_{[0,T-1]})
		\end{bmatrix}=(\cal{R}+\cal{K}\begin{bmatrix}
			z_{10}
		\end{bmatrix})\times \mathcal{N}\times \mathbb{R}^{\text{m(L+s-1)}}.
		\label{5_3}
	\end{equation}
    Further, $(\bar{u}_{[0,L-1]},\bar{y}_{[0,L-1]})$ is parameterizable by $\{u_{[0,T-s]},y_{[0,T-s]}\}$ if and only if there exists a state trajectory $\bar{z}_{[0,L-1]}$ with
	\begin{equation}
		\bar{z}_{0}\in (\cal{R}+\cal{O}+\cal{K}\begin{bmatrix}
			z_{10}
		\end{bmatrix})\times \mathcal{N}
		\label{5_4}
	\end{equation}such that $\{\bar{u}_{[0,L-1]},\bar{z}_{{[0,L-1]}},\bar{z}_{[0,L-1]}\}$ is an input-state-output trajectory generated by (\ref{2_1}).	
\end{theorem}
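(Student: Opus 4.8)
The plan is to reduce to the quasi-Weierstra$\beta$ form \eqref{2_2}--\eqref{2_3}, which is legitimate because input--output data are invariant under a state coordinate change. There the slow part has state $z_1$, matrices $(A_1,B_1,C_1)$ and causal dynamics \eqref{2_4}, while the fast part has state $z_2$ governed by the explicit non-causal identity \eqref{2_6}, $z_2(k)=-\sum_{m=0}^{s-1}N^mB_2u(k+m)$. I would treat the two parts separately and glue them through the output equation \eqref{2_3}.

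\textbf{Step 1.} I first establish the image identity \eqref{5_3}. The inclusion ``$\subseteq$'' is immediate: \eqref{5_1} gives $z_1(t)\in\mathcal R+\mathcal K[z_{10}]$ for every $t$, \eqref{2_6} gives $z_2(t)\in\mathcal N$ for every $t$, and the input block is unconstrained. For ``$\supseteq$'' I would adapt the uncontrollable fundamental lemma of \cite{yu2021controllability} to the slow subsystem. The subspace $\mathcal V:=\mathcal R+\mathcal K[z_{10}]$ is $A_1$-invariant, contains $\mathrm{Im}(B_1)$ and $z_1(0)$, and the span of the slow states reachable from $z_1(0)$ is exactly $\mathcal V$; moreover, since $\delta\ge\delta_{\text{min}}$, Cayley--Hamilton gives $\mathcal K[z_{10}]=\mathrm{span}\{A_1^iz_{10}:0\le i<\delta\}$, so $\delta$ powers already exhaust the cyclic subspace. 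Running the rank argument of \cite{yu2021controllability} on the restriction of $(A_1,B_1)$ to $\mathcal V$, persistency of excitation of order $(L+s-1)+\delta$ forces the Hankel matrix of slow states stacked on length-$(L+s-1)$ input windows to have image exactly $\mathcal V\times\mathbb R^{m(L+s-1)}$. Appending the fast state is then algebraic: by \eqref{2_6} the fast-state block is a fixed linear image of the input block through $[B_2\ NB_2\ \cdots\ N^{s-1}B_2]$, whose range is $\mathcal N$, and the surplus $\delta$ orders of excitation give exactly the slack needed to realize any element of $\mathcal V\times\mathcal N\times\mathbb R^{m(L+s-1)}$.

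\textbf{Step 2.} For the parameterizability equivalence I would use the identity --- read off from the quasi-Weierstra$\beta$ form exactly as in the proofs of Theorems~\ref{thm:R-cont} and \ref{thm:obser} --- that $\begin{bmatrix}H_L(u)\\ H_L(y)\end{bmatrix}$ equals a \emph{fixed} matrix $\Theta$ (assembled from $O_{sL}$, $T_L$ and $D$, with state block $O_{sL}$) applied to the state-and-input Hankel matrix appearing in \eqref{5_3}; the index change $T-1\mapsto T-s$ merely accounts for the $s-1$ extra input samples the fast subsystem consumes to produce $L$ consecutive outputs. For ``$\Rightarrow$'': any $g$ satisfying \eqref{5_2} realizes, by linearity and shift-invariance of \eqref{2_1}, an actual length-$L$ input--state--output trajectory whose slow initial state is the corresponding combination of the data's slow states (hence lies in $\mathcal R+\mathcal K[z_{10}]$) and whose fast initial state is $-\sum_{m=0}^{s-1}N^mB_2\bar u(m)\in\mathcal N$, which gives \eqref{5_4}. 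For ``$\Leftarrow$'': given a trajectory with $\bar z_0\in(\mathcal R+\mathcal O+\mathcal K[z_{10}])\times\mathcal N$, split $\bar z_1(0)=\rho+\omega+\kappa$ with $\omega\in\mathcal O$; since $C_1A_1^k\omega=0$ for all $k$, deleting $\omega$ leaves $\bar y_{[0,L-1]}$ unchanged and moves the slow initial state into $\mathcal R+\mathcal K[z_{10}]$, so by \eqref{5_3} the state-and-input vector of the modified trajectory lies in the image of the data matrix; multiplying a witnessing $g$ through $\Theta$ returns \eqref{5_2}.

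\textbf{Main obstacle.} The delicate point is the reverse inclusion in \eqref{5_3}: showing that only $\delta$ orders of excitation beyond the raw window length $L+s-1$ already make the image the \emph{whole} product space, and in particular that the fast-subsystem factor $\mathcal N$ is attained \emph{jointly with} an arbitrary prescribed input window, even though $z_2$ is a deterministic function of the inputs. This requires combining the uncontrollable-system refinement (counting with $\delta_{\text{min}}$ rather than $q$) with the correct alignment of the forward slow dynamics and the non-causal fast dynamics; everything else is linear bookkeeping on the Weierstra$\beta$ form.
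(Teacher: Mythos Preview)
Your overall route matches the paper's: reduce to the quasi-Weierstra{\ss} form; prove (\ref{5_3}) by double inclusion, the nontrivial direction via the Cayley--Hamilton/left-kernel shift argument of \cite{yu2021controllability} applied to the slow pair $(A_1,B_1)$; and deduce the parameterizability equivalence from the factorization
\[
\begin{bmatrix}H_L(u)\\H_L(y)\end{bmatrix}=\left[\begin{array}{c:c}0 & [I_{mL}\ 0]\\ O_{sL} & T_L\end{array}\right]\begin{bmatrix}H_1(z_1)\\H_{L+s-1}(u)\end{bmatrix}
\]
together with $\mathcal O\subseteq\mathrm{Ker}\,O_{sL}$. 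The paper carries out the left-kernel step explicitly: given a left annihilator $v^\top=[\zeta^\top\ \beta^\top\ \eta^\top]$ of the stacked state--input matrix, it constructs shifted vectors $\omega_0,\dots,\omega_\delta$ (replacing $\zeta^\top$ by $\zeta^\top A_1^j$ and prepending $\zeta^\top A_1^{j-1}B_1,\dots,\zeta^\top B_1$ to the input part), kills the slow-state component via the minimal polynomial of $A_1$, and uses full row rank of $H_{L+s+\delta-1}(u)$ to force first $\eta=0$, then $\zeta\perp\mathcal R$, $\zeta\perp\mathcal K[z_{10}]$, and finally $\beta\perp\mathcal N$. Your Step~2 is essentially identical to the paper's.

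The obstacle you flag, however, is more than ``delicate'': your suggested fix (``the surplus $\delta$ orders of excitation give exactly the slack'') cannot work, and the full product identity (\ref{5_3}) actually fails as written. Since $z_2(k)=-\sum_{m=0}^{s-1}N^mB_2u(k+m)$ is a fixed linear image of the first $s$ input blocks in the \emph{same} column, the row vector $[\,0_q^\top\ \ \beta^\top\ \ \beta^\top B_2\ \cdots\ \beta^\top N^{s-1}B_2\ \ 0_{m(L-1)}^\top\,]$ annihilates the stacked matrix for \emph{every} $\beta\in\mathbb R^r$, yet for $\beta\not\perp\mathcal N$ it is not orthogonal to $(\mathcal R+\mathcal K[z_{10}])\times\mathcal N\times\mathbb R^{m(L+s-1)}$; no amount of extra excitation changes this. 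The paper's $\omega_j$-construction has the same lacuna: for $j\ge1$ the $\beta$-slot is left untouched, so $\omega_j$ applied to column $k$ leaves the residual $\beta^\top(z_2(k)-z_2(k+j))$ and need not vanish. What \emph{is} correct---and what both your Step~2 and the paper's (\ref{5_18})--(\ref{5_19}) actually use---is the slow-subsystem identity $\mathrm{Im}\bigl[\begin{smallmatrix}H_1(z_{1})\\H_{L+s-1}(u)\end{smallmatrix}\bigr]=(\mathcal R+\mathcal K[z_{10}])\times\mathbb R^{m(L+s-1)}$; the fast initial state is then recovered from the input window via (\ref{2_6}) and lands in $\mathcal N$ automatically, so the second statement of the theorem goes through unchanged.
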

\begin{proof} See the appendix. 
\end{proof}

    Due to its dependence on the state of $\bar{z}_{0}$ and the unknown subspace.
    The condition in $\bar{z}_{0}\in (\cal{R}+\cal{O}+\cal{K}\begin{bmatrix}
    	z_{10}
    \end{bmatrix})\times \mathcal{N}$ is difficult to verify, especially when only input-output traces are available. However, the following corollary provides an alternative that requires neither controllability nor state measurement. Through specific settings, the conditions of Theorem \ref{thm:Willems} are transformed into a more easily applicable form, so that conclusions about trajectory parameterization can be obtained without directly referring to the initial state of the system.

\begin{corollary}\label{cor:Willems}
    Let $(u_{[0,K-1]},y_{[0,K-1]})$ be an input-output trajectory generated by system (\ref{2_1}), and $u_{[0,T-1]}$ be persistently exciting of order $(L+\delta+s-1)$, where $\delta$ satisfies $\delta \ge \delta_{\text{min}}$ and $L+s-1\le T\le K+s-1$. Then $(u_{[t,t+L-1]},y_{[t,t+L-1]})$ is parameterizable by $(u_{[0,T-s]},y_{[0,T-s]})$ for $0\le t\le K-L$. Conversely, if $(\bar{u}_{[0,L-1]},\bar{y}_{[0,L-1]})$ is parameterizable by $(u_{[0,T-s]},y_{[0,T-s]})$, then it is an input-output trajectory generated by the system (\ref{2_1}).
\end{corollary}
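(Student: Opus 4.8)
The plan is to derive Corollary~\ref{cor:Willems} as a direct consequence of Theorem~\ref{thm:Willems}. First, I would observe that the forward direction consists of two sub-claims: (i) the \emph{existence} of a persistently exciting trajectory of order $(L+\delta+s-1)$ on the horizon $[0,T-1]$ --- which is exactly the hypothesis of Theorem~\ref{thm:Willems}, so the image identity \eqref{5_3} holds; and (ii) that for any shift $t$ with $0\le t\le K-L$, the windowed trajectory $(u_{[t,t+L-1]},y_{[t,t+L-1]})$ satisfies the initial-condition requirement \eqref{5_4} of Theorem~\ref{thm:Willems}. For (ii), the key point is that $(u_{[0,K-1]},y_{[0,K-1]})$ is itself a \emph{genuine} trajectory of \eqref{2_1}, hence it extends to some infinite $(u,y)\in\mathcal{B}$ with an associated state trajectory $z$. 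By \eqref{5_1}, the slow-subsystem state $z_1(t)$ lies in $\mathcal{R}+\mathcal{K}[z_{10}]\subseteq \mathcal{R}+\mathcal{O}+\mathcal{K}[z_{10}]$ for every $t$; and by \eqref{2_6}, the fast-subsystem state $z_2(t)$ lies in $\mathcal{N}=\mathit{Im}[B_2\ NB_2\ \cdots\ N^{s-1}B_2]$ for every $t$. Therefore $z(t)=(z_1(t),z_2(t))\in(\mathcal{R}+\mathcal{O}+\mathcal{K}[z_{10}])\times\mathcal{N}$ for all $t$, so in particular the state at time $t$ --- which serves as the initial state $\bar z_0$ of the shifted window --- satisfies \eqref{5_4}. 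Applying the ``if'' part of Theorem~\ref{thm:Willems} (with the window re-indexed so that its start time is $0$) then yields parameterizability of $(u_{[t,t+L-1]},y_{[t,t+L-1]})$ by $(u_{[0,T-s]},y_{[0,T-s]})$.

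For the converse, I would invoke directly the ``only if'' part of Theorem~\ref{thm:Willems}: if $(\bar u_{[0,L-1]},\bar y_{[0,L-1]})$ is parameterizable by $(u_{[0,T-s]},y_{[0,T-s]})$, then there is a state trajectory $\bar z_{[0,L-1]}$ with $\bar z_0\in(\mathcal{R}+\mathcal{O}+\mathcal{K}[z_{10}])\times\mathcal{N}$ such that $(\bar u,\bar z,\bar y)$ satisfies \eqref{2_1} on $[0,L-1]$. To conclude that $(\bar u_{[0,L-1]},\bar y_{[0,L-1]})$ is an input-output trajectory \emph{of the system} (i.e.\ the restriction of some $(u,y)\in\mathcal{B}$), one extends the finite input $\bar u_{[0,L-1]}$ arbitrarily to an infinite input on $\mathbb{N}_0$; since the state at time $0$ is admissible (it decomposes into a slow part and a fast part of the correct form), the descriptor dynamics \eqref{2_4}--\eqref{2_6} propagate it to a well-defined infinite trajectory whose restriction to $[0,L-1]$ coincides with $(\bar u_{[0,L-1]},\bar z_{[0,L-1]},\bar y_{[0,L-1]})$, using uniqueness of solutions given input and slow initial value. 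Hence $(\bar u_{[0,L-1]},\bar y_{[0,L-1]})\in\mathcal{B}|_{[0,L-1]}$.

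The remaining technical point, which I expect to be the main obstacle, is the careful bookkeeping of the data horizon: Theorem~\ref{thm:Willems} requires persistency of excitation of order $(L+\delta+s-1)$ and expresses parameterizability in terms of $\{u_{[0,T-s]},y_{[0,T-s]}\}$, whereas the corollary phrases persistency of $u_{[0,T-1]}$ and allows $L+s-1\le T\le K+s-1$. I would reconcile these by noting that the Hankel matrix appearing in the parameterization~\eqref{5_2} is built from the length-$(T-s+1)$ data $u_{[0,T-s]}$, while the ``extended'' input needed internally (because the fast subsystem reads $s-1$ future inputs, cf.\ \eqref{2_5}--\eqref{2_6}) has length $T$; the inequality $T\le K+s-1$ guarantees these extended inputs are still available within the recorded data window $[0,K-1]$ plus its admissible continuation, and $L+s-1\le T$ guarantees the Hankel matrix has enough columns. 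I would state this alignment explicitly, then the rest is immediate from Theorem~\ref{thm:Willems}. No genuinely new estimate is needed --- the corollary is a repackaging that removes the explicit reference to $\bar z_0$ by exploiting that \emph{every} state visited along a true trajectory automatically lies in the admissible subspace.
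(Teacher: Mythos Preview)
Your proposal is correct and follows essentially the same approach as the paper: for the forward direction you lift the input-output data to a state trajectory, use \eqref{5_1} and \eqref{2_6} to place $z(t)$ in $(\mathcal{R}+\mathcal{K}[z_{10}])\times\mathcal{N}\subseteq(\mathcal{R}+\mathcal{O}+\mathcal{K}[z_{10}])\times\mathcal{N}$, and invoke Theorem~\ref{thm:Willems}; for the converse you build the state via the construction \eqref{5_14} from the necessity proof of Theorem~\ref{thm:Willems}. Your additional care with the horizon bookkeeping and the extension to an infinite trajectory in $\mathcal{B}$ is more explicit than the paper's own proof, but the underlying argument is the same.
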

\begin{proof}
    First, let $z_{[K-1]}$ be such that $(u_{[0,K-1]},z_{[0,K-1]},y_{[0,K-1]})$ is an input-state-output trajectory generated by the system (\ref{2_1}). Then (\ref{5_1}) holds and $z_2(k)\in \mathcal{N}$ holds for all $0\le t\le K-1$. Hence $(u_{[t,t+L-1]},z_{[t,t+L-1]},y_{[t,t+L-1]})$ is an input-state-output trajectory generated by the system (\ref{2_1}). From the second statement in Theorem \ref{thm:Willems} we know that $(u_{[t,t+L-1]},y_{[t,t+L-1]})$ is parameterizable by $(u_{[0,T-s]},y_{[0,T-s]})$. Second, if $(\bar{u}_{[0,L-1]},\bar{y}_{[0,L-1]})$ is parameterizable by $(u_{[0,T-s]},y_{[0,T-s]})$, then one can verify that it is indeed an input-output trajectory generated by the system (\ref{2_1}) by constructing a length-L state trajectory similar to the one in (\ref{5_14}).
\end{proof}

    Corollary 1 establishes trajectory parameterization results by introducing a specific configuration that reformulates the conditions of Theorem \ref{thm:Willems} into an applicable framework, thereby eliminating explicit dependence on the system’s initial states. It shows that any segment of length $L$ of the input-output trajectory can be parameterized by its first segment of length $(T-s)$, assuming sufficient persistence of the excitation. This theoretical result directly underpins the model-free predictive control framework of DeePC.
    \vspace{-0.1\baselineskip}
\section{Applications to DeePC of descriptor systems}
\vspace{-0.1\baselineskip}

 {DeePC \cite{coulson2019data} replaces traditional model-based prediction with a data-driven optimization framework, emerging as an increasingly attractive control paradigm. This approach constructs predictive models without online system identification by leveraging a \emph{fixed pre-collected input-output dataset} \(({\bar u}_{[0, T-s]}, {\bar y}_{[0, T-s]}) \)—satisfying persistent excitation requirements—to parameterize all input-output trajectories based on Willems' fundamental lemma. Recent work by Schmitz et al. \cite{schmitz2022willems} extended DeePC to descriptor systems under the assumptions of {R-controllability} and {R-observability}.
However, this paradigm faces limitations for {uncontrollable systems}. Although Theorem \ref{thm:R-cont} provides a sufficient condition for guaranteeing R-controllability, its {conservatism remains unclear}, and computational challenges arise in verifying the theorem for practical implementation. To address these limitations, we propose an {online DeePC paradigm applicable to uncontrollable descriptor systems}, inspired by \cite{yu2021controllability} and building on Corollary~\ref{cor:Willems}.}

    { 
Suppose system \eqref{2_1} is R-observable and regular, and let \(({u}_{[0, T-s]}, {y}_{[0, T-s]})\) be an input-output trajectory where \(u_{[0,T-1]}\) is persistently exciting of order \(N+L+\delta+s-1\).}  {Consider regulating the output \(y(k)\) of system \eqref{2_1} to the reference \(y^s \in \mathbb{R}^p\) using a receding horizon control strategy. For a prediction horizon \(L\), the control input sequence \(\hat{u} = (\hat{u}(t), \dots, \hat{u}(t+L-1))\) is obtained by solving the following optimization problem (only the first control input \(\hat{u}(t)\) is implemented before receding the horizon):
\begin{subequations}\label{6_1}
\begin{align}
\underset{\substack{(\hat{u},\hat{y}) \\ \alpha(t)}}{\min} \quad & 
\sum_{k=t}^{t+L-1} \left( \left\| \hat{y}(k) - y^{s} \right\|_{Q}^{2} + 
\left\| \hat{u}(k) \right\|_{R}^{2} \right) \label{eq:6_1a} \\
\text{s.t.} \quad & 
\begin{bmatrix}
u_{[t-N,t-1]} \\
\hat{u}_{[t,t+L-1]} \\
y_{[t-N,t-1]} \\
\hat{y}_{[t,t+L-1]}
\end{bmatrix} = 
\begin{bmatrix}
H_{N+L}({u}_{[0,T-s]}) \\
H_{N+L}({y}_{[0,T-s]})
\end{bmatrix} \alpha(t) \label{eq:6_1b} \\
& 
\begin{bmatrix}
\hat{u}_{[t+L-N,t+L-1]} \\
\hat{y}_{[t+L-N,t+L-1]}
\end{bmatrix} = 
\begin{bmatrix}
0 \\ \vdots \\ 0 \\
y^{s} \\ \vdots \\ y^{s}
\end{bmatrix} \label{eq:6_1c} \\
& \hat{u}(k) \in \mathbb{U}, \quad t \leq k \leq t+L-1 
\end{align}
\end{subequations}
where \((\hat{u},\hat{y}):[t, t+L-1] \to \mathbb{R}^{m} \times \mathbb{R}^{p}\), \(\alpha(t) \in \mathbb{R}^{T-N-L-s+2}\), \((T-s+1) \leq t\), and \((q+s-1) \leq N \leq t\). The set \(\mathbb{U} \subset \mathbb{R}^m\) defines feasible control inputs. The cost function weights \(Q \succcurlyeq 0 \in \mathbb{R}^{p \times p}\) and \(R \succcurlyeq 0 \in \mathbb{R}^{m \times m}\) (with \(\|u\|_R = u^\intercal R u\)) penalize tracking error and control effort respectively. {Notably, unlike DeePC in \cite{schmitz2022willems}, the initial trajectory \((u_{[0, T-s]}, y_{[0, T-s]})\) in \eqref{6_1} cannot be an arbitrarily generated offline trajectory.}}

 { Corollary \ref{cor:Willems} guarantees the online DeePC remains applicable even when the system (\ref{2_1}) is not R-controllable. Indeed, if the input-output subsequence $(\hat{u}_{[t, t+L-1]}, \hat{y}_{[t, t+L-1]})$ generated by system (\ref{2_1}) can be concatenated with historical data $(u_{[t-N, t-1]}, y_{[t-N, t-1]})$ to form a complete trajectory segment starting from the dynamically updated initial segment $(u_{[0,T-s]},y_{[0,T-s]})$, then this segment satisfies the constraint in (\ref{eq:6_1b}). On the other hand, any trajectory $(\hat{u}_{[t, t+L-1]}, \hat{y}_{[t, t+L-1]})$ satisfying (\ref{eq:6_1b}) corresponds to a trajectory of system (\ref{2_1}). This establishes that (\ref{6_1}) exactly characterizes the feasible solution space of the system’s true dynamics via data-driven parameterization. Thus, Corollary \ref{cor:Willems} essentially ensures that online DeePC, through dynamically updated real-time data, maintains a precise representation of the system’s behavior.}
    
    To verify the universality and effectiveness of the proposed theoretical method, this section presents two case studies on typical infrastructure systems: power systems and water networks.
     \subsection{Case Study $1$: Frequency Regulation in Power Networks}
     Consider a power system with $\mathcal{M} = \{1,\dots,n\}$ buses and $\mathcal{G}\subseteq\mathcal{M}$ generators. Without loss of generality we assume  $\mathcal{G} = \{1,\dots,g\}$. The grid parameters are described by the bus admittance matrix $Y = \bar{G}+i\bar{B} \in \mathbb{C}^{n\times n}$, where $\bar{G}\in \mathbb{R}^{n\times n}$ describes the conductances and $\bar{B}\in\mathbb{R}^{n\times n}$ describes the susceptances of all transmission lines.
    
    In this work, the following assumptions are adopted, consistent with those commonly used in the analysis of frequency control and power system dynamics \cite{simpson2013synchronization} \cite{schiffer2016survey} \cite{song2015small}:

     \begin{enumerate}
          \item The system is lossless ($\bar{G} = 0$).
          \item Generators are modeled as second-order synchronous generators.
          \item Loads are represented as constant power demands equal to zero.
          \item Bus voltages are assumed to be constant. 
          \item The voltage angle differences are small.
          \item The network topology remains connected.
     \end{enumerate} 

    Following \cite{bergen1981structure}, the dynamics of a synchronous generator $ i \in \mathcal{G} $ are described by
    \begin{equation} \label{power_sys}
    M_i \ddot{\theta}_i + D_i \dot{\theta}_i = p_i - \sum_{j \in \mathcal{M}} \bar{B}_{i,j} \sin(\theta_i - \theta_j),
    \end{equation}
    where $ M_i > 0 $ is the inertia of generator, $ D_i > 0 $ the damping coefficient, $ p_i \geq 0 $ the mechanical power input, and $ \theta_i \in \mathbb{R} $ the rotor phase angle.
    
    Rewriting (\ref{power_sys}) as a first-order system:
    \begin{align} \label{6_2}
    \begin{cases}
    \dot{\theta}_i = \omega_i, \\
    \dot{\omega}_i = M_i^{-1} \bigg( p_i - D_i \omega_i - \sum_{j \in \mathcal{M}} \bar{B}_{i,j} \sin(\theta_i - \theta_j) \bigg).
    \end{cases}
    \end{align}
    
    For non-generator nodes ($ \forall i \in \mathcal{M} \setminus \mathcal{G} $), the algebraic power balance equations are:
    \begin{equation} \label{eq:algebraic}
     - \sum_{j \in \mathcal{M}} \bar{B}_{i,j} \sin(\theta_i - \theta_j) = 0,
    \end{equation}
    consistent with the constant power load assumption (cf. Assumption $3$).

    We employ the discrete-time linear descriptor system (LDS) model proposed by \cite{schmitz2022data}, which linearizes the power flow equations under small voltage angle differences (Assumption $5$). The system matrices are defined as \( E, A \in \mathbb{R}^{(n+g)\times(n+g)} \), \( B \in \mathbb{R}^{(n+g)\times g} \), and \( C \in \mathbb{R}^{p\times(n+g)} \). Let \( L \in \mathbb{R}^{n\times n} \) denote the graph Laplacian matrix associated with the susceptance matrix \( \bar{B} \) of the power network, given by:  
    \[
    L = \operatorname{diag}\left( \sum_{j \in \mathcal{M}} \bar{B}_{1,j}, \ldots, \sum_{j \in \mathcal{M}} \bar{B}_{n,j} \right) - \bar{B}.
    \]  
    As the study focuses on discrete-time descriptor systems, we employ the forward Euler method to discretize the continuous power flow dynamics, with $\tau = 0.1$ defining the discretization step size. The system matrices are structured as:  
    \[
    E = \begin{bmatrix} 
    0 & I_g & 0 \\ 
    \tilde{M} & 0 & 0 \\ 
    0 & 0 & 0 
    \end{bmatrix}, \quad
    A = E - \tau \left[ 
    \begin{array}{c c c} 
    -I_{g} & 0 & 0 \\ 
    \tilde{D} & \multicolumn{2}{c}{\multirow{2}{*}{\Large $L$}} \\ 
    0 & \multicolumn{2}{c}{} \\
    \end{array} 
    \right], \quad
    B = \tau \begin{bmatrix} 
    0 \\ 
    I_g \\ 
    0 
    \end{bmatrix},
    \]  
    where \( \tau > 0 \) represents the discretization time step, and 
    \[
    \tilde{M} = \operatorname{diag}(M_1, \ldots, M_g), \quad \tilde{D} = \operatorname{diag}(D_1, \ldots, D_g),
    \]  
    denote the diagonal matrices of generator inertias and damping coefficients, respectively.  
    
    The state vector \( x(t) \in \mathbb{R}^{n+g} \) and control input \( u(t) \in \mathbb{R}^g \) are defined as:  
    \[
    x(t) = \begin{bmatrix} 
    \omega_{\mathcal{G}}(t) \\ 
    \theta_{\mathcal{G}}(t) \\ 
    \theta_{\mathcal{M} \setminus \mathcal{G}}(t) 
    \end{bmatrix}, \quad
    u(t) = p(t),
    \]  
    where \( \omega_{\mathcal{G}} \) and \( \theta_{\mathcal{G}} \) represent the angular frequencies and rotor angles of generators, respectively, while \( \theta_{\mathcal{M} \setminus \mathcal{G}} \) denotes the voltage angles at non-generator buses.

    However, in order to verify Corollary \ref{cor:Willems}, we need to disrupt the R-controllable structure of the system. Define the input matrix as
    \[ B = \tau\begin{bmatrix} \mathbf{0}_{g \times g} \\ \begin{bmatrix} 1 & 1 & 0 \\ -1 & 0 & 1 \\ 0 & -1 & -1 \end{bmatrix} \\ \mathbf{0}_{(n-g) \times g} \end{bmatrix}, \]  
    which implies that the input \( u \) influences the mechanical power of each generator through a linearly dependent combination. We consider a nine-bus system \cite{schulz1977long}, with parameters from \cite{scholtz2004observer}. The IEEE 9-bus power system is shown in Fig. \ref{fig:nine bus power system}, including three generators $(g_1, g_2, g_3)$ and three consumers $(LumpA, LumpB, LumpC)$.
    \begin{figure}[t]
    \centering    \includegraphics{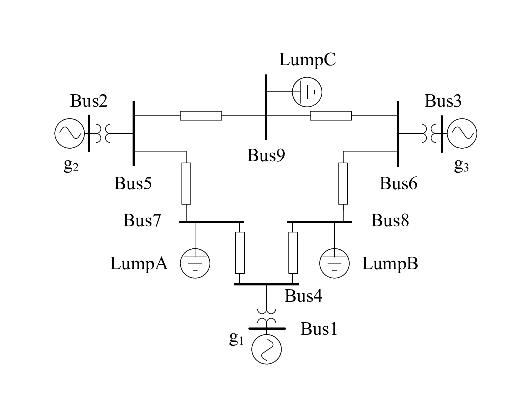} 
    \caption{IEEE 9-bus power system}\label{fig:nine bus power system}
    \end{figure}

    In Case Study 1, the controllable state $x_1$ and the uncontrollable state $x_7$ were selected as system outputs, which ensures that system (\ref{2_1}) is R-observable. The system parameters were configured as \( N = 6 \), \( L = 14 \), \( T = 100 \), \( K = 200 \) and generate online input-output trajectories \( (u_{[0,T-s]}, y_{[0,T-s]}) \), where the input sequence \( \{u_t\}_{t=0}^{T-1} \) was independently and uniformly sampled from the interval \( [-1, 1] \), guaranteeing persistent excitation of order \( N + L + \delta + s - 1 \). At time $t = T-s+1, T-s+2, \dots,K$, given $(u_{[0,t-1]},y_{[t-1]})$, the control input \( \hat{u}_t \) is determined by solving the optimization problem (\ref{6_1}) with \( Q=5I_2 \), \( R=I_3 \) and \( \mathbb{U}=[-1,1] \) for all $t\in\mathbb{N}$.
    In Figure. \ref{fig:power_system_trajectories}, we illustrate steering the system's outputs to the setpoints $(y^{s,1} = \begin{bmatrix}
	   3 & 3 
	\end{bmatrix}^{\top})$ and $( y^{s,2} = \begin{bmatrix}
		1 & 5
	\end{bmatrix}^{\top})$ one by one using using online DeePC. It turns out that the controllable component $x_1$ is successfully steered to the equilibrium, while the uncontrollable component $x_7$ is not. 
    \begin{figure}[t]
	\centering
        \includegraphics[width=0.5\textwidth,keepaspectratio]{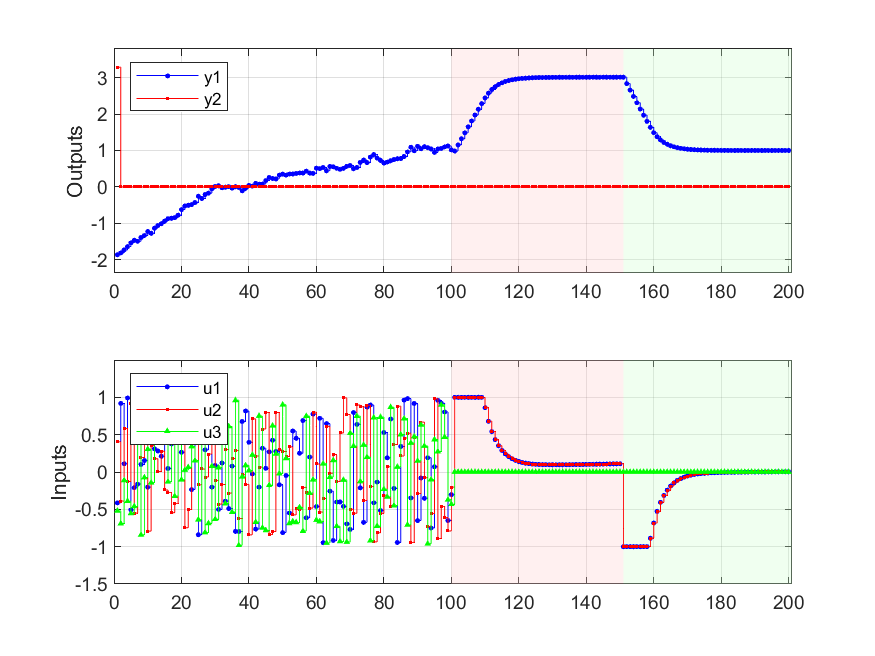}
	\caption{Output trajectory of the power network for Online DeePC. The red-shaded region indicates the transition period from random excitation inputs to the optimal control input, successfully steering the system's controllable output $x_1$ to equilibrium $y^{s,1}$, and subsequent tracking of the new equilibrium $y^{s,2}$ begins in the green-shaded phase starting at $t = 151$. Meanwhile, the output $x_7$ is uncontrollable. }
	\label{fig:power_system_trajectories}
    \end{figure}
 \subsection{Case Study 2: Pressure Control in Water Networks}   
     Following prior hydraulic network modeling approaches \cite{burgschweiger2009optimization} and \cite{boulos2006comprehensive}, water networks can be represented as directed graphs \( \mathcal{G} = (\mathcal{V},\mathcal{E}) \), where the vertex set \( \mathcal{V} \) comprises three functional components: (R) reservoirs, (J) junctions, and (T) storage tanks. The edge set \( \mathcal{E} \) consists of hydraulic elements, including pipes, pumps, and valves. The key variables are the pressure head $h_i$ at each network node $i$ and the $Q_{ij}$ from node $i$ to $j$.

    The hydraulic model governing the network dynamics includes 
    constant reservoir heads, flow balance equations at junctions 
    and tanks, and pressure difference equations along all edges:
    \begin{enumerate}
        \item Reservoirs maintain fixed hydraulic heads: 
        \[
         h_i^{\text{R}} = \text{constant}, \quad \forall i \in \mathcal{V}_{\text{R}}.
         \]
        \item Junctions obey flow continuity:  
         \[
         d_i = \sum_{j \in \mathcal{N}_i^-} Q_{ji} - \sum_{k \in \mathcal{N}_i^+} Q_{ik}, \quad \forall i \in \mathcal{V}_{\text{J}}
         \]  
         where \( d_i \) denotes water demand and \( \mathcal{N}_i^- \), \( \mathcal{N}_i^+ \) represent upstream/downstream neighbors.
         \item Storage tanks follow mass balance:  
         \[
         A_i \frac{dh_i}{dt} = \sum_{j \in \mathcal{N}_i^-} Q_{ji} - \sum_{k \in \mathcal{N}_i^+} Q_{ik}, \quad \forall i \in \mathcal{V}_{\text{T}}
         \]  
         with \( A_i \) being the tank's cross-sectional area.
         \item Pipes exhibit pressure-dependent flow:  
         \[
         Q_{ij} = f_{ij}(h_i - h_j), \quad \forall (i,j) \in \mathcal{E}_{\text{pipe}}.
         \]  
         \item Pumps impose fixed head gain:  
         \[
         h_j - h_i = +\Delta h_{ij}^{\text{pump}}, \quad \forall (i,j) \in \mathcal{E}_{\text{pump}}.
         \]  
         \item Valves introduce prescribed head loss:  
         \[
         h_j - h_i = -\Delta h_{ij}^{\text{valve}}, \quad \forall (i,j) \in \mathcal{E}_{\text{valve}}.
         \]
    \end{enumerate}

    The hydraulic relationship in pipeline systems adopts a generalized functional form \( f_{ij}(\cdot) \), allowing substitution with domain-specific formulations such as the Darcy-Weisbach equation.

    Consider the EPANET 3 \cite{rossman2000epanet} linearized in steady state with non-zero pressure drops. The topological configuration of the EPANET 3 network is illustrated in Fig. \ref{fig:epanet3}. We employ the water network descriptor formulation developed by \cite{pasqualetti2014control}.  we identify $x_1$, $x_2$, $x_3$, and $x_4$ denote, respectively, the pressure at the reservoir $R_2$, at the reservoir $R_1$ and at the tanks $T_1$, $T_2$, and $T_3$, at the junction $P_2$, and at the remaining junctions. The descriptor model for the EPANET 3 network is
    \begin{equation}
        \begin{aligned}
        {\left[\begin{array}{c}
        \dot{x}_1(t) \\
        M \dot{x}_2(t) \\
        0 \\
        0
        \end{array}\right] } & =\left[\begin{array}{cccc}
        0 & 0 & 0 & 0 \\
        0 & A_{22} & 0 & A_{24} \\
        A_{31} & 0 & A_{33} & A_{34} \\
        0 & A_{42} & A_{43} & A_{44}
        \end{array}\right]\left[\begin{array}{l}
        x_1(t) \\
        x_2(t) \\
        x_3(t) \\
        x_4(t)
        \end{array}\right]\\
        y & =\left[\begin{array}{llll}
        C_1 & C_2 & C_3 & C_4
        \end{array}\right] x,
        \end{aligned}
    \end{equation}
    where the pattern of zeros is due to the network interconnection structure, and $M = diag(1, A_1, A_2, A_3)$ corresponds to the dynamics of the reservoir $R_1$ and the tanks $T_1$, $T_2$, and $T_3$. Consistent with the methodology adopted in Case Study 1, we employ Euler-forward discretization with the discretization step size $\tau = 0.1$.

\begin{figure}[t]
    \centering
\includegraphics[width=0.5\textwidth,keepaspectratio]{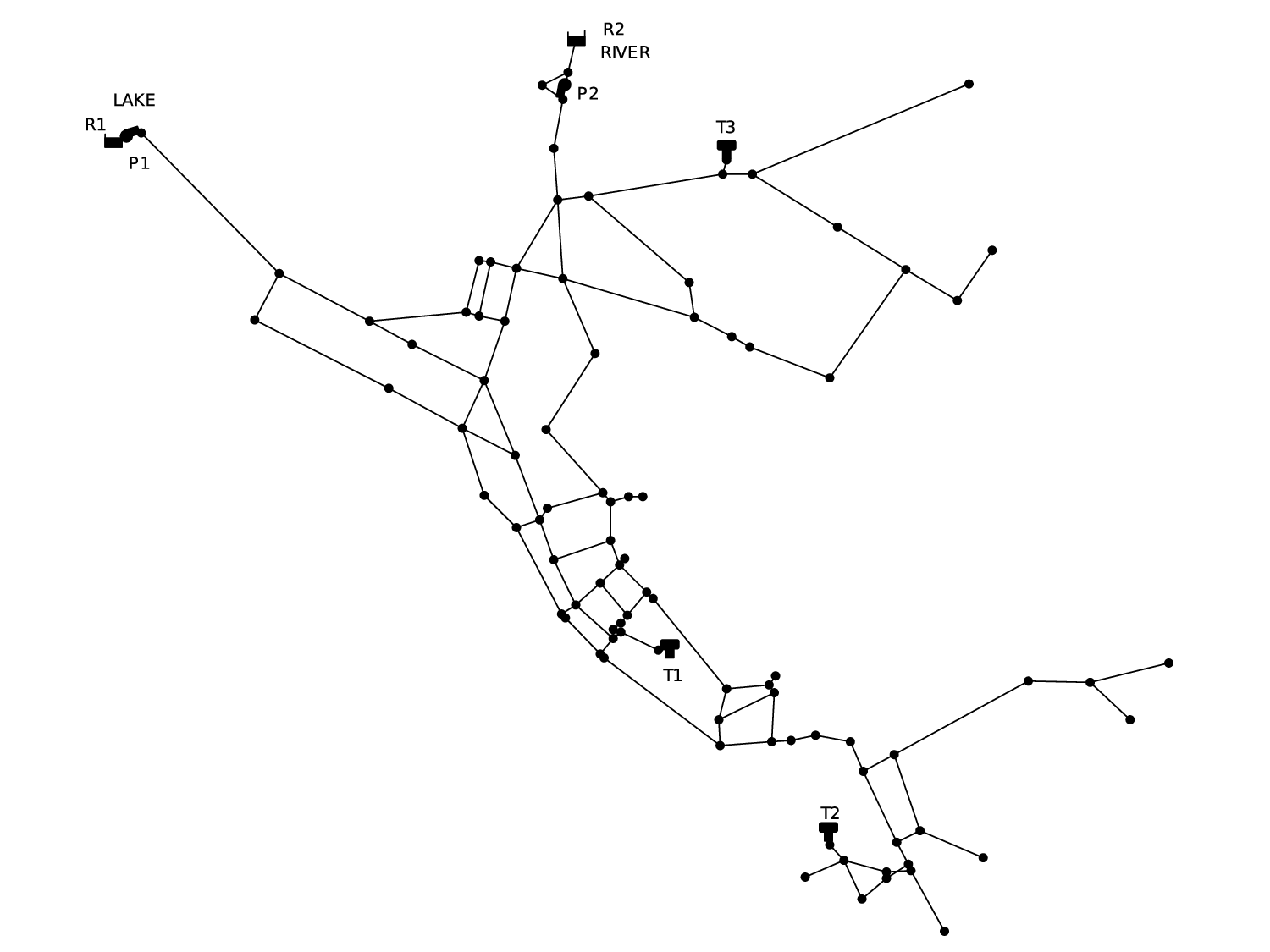}
    \caption{The structure of the EPANET 3}
    \label{fig:epanet3}
    \end{figure}

    To ensure the descriptor system maintains regularity and violates R-controllability, we strategically allocate the control input to the first junction of $x_3$. In order to prove the correctness of Corollary \ref{cor:Willems}, we chose the pressure at reservoir $R_1$ and tank $T_3$ as the output, hereby guaranteeing the system (\ref{2_1}) is R-observable.

    In the present simulation framework, the three-dimensional vector $u$ is designated as the input of the system, with its components $u = \begin{bmatrix}
        u_1 & u_2 & u_3
    \end{bmatrix}^{\top}$. 
    Case Study 2 adopts the parameters $N=5, L=12, T=100, \,\ \text{and}\,\ K=200$. For all \( t \in \mathbb{N} \), the control input \( \hat{u}_t \) is determined by solving the optimization problem (\ref{6_1}) with \( Q=5I_2 \), \( R=I_3 \) and \( \mathbb{U}=[-1,1] \). In Figure \ref{fig:water_network_trajectories}, we illustrate simulations of steering the system's outputs to the setpoints $(y^{s,1} = \begin{bmatrix}
	   5 & 3 
	\end{bmatrix}^{\top})$ and $( y^{s,2} = \begin{bmatrix}
		3 & 5
	\end{bmatrix}^{\top})$ one by one. It turns out that both case studies confirm that online DeePC maintains accurate tracking of the controllable output \( y(t) \) to its reference \( y^s(t) \), despite the presence of the uncontrollable output.
    \begin{figure}[t]
	\centering	\includegraphics[width=0.5\textwidth,keepaspectratio]{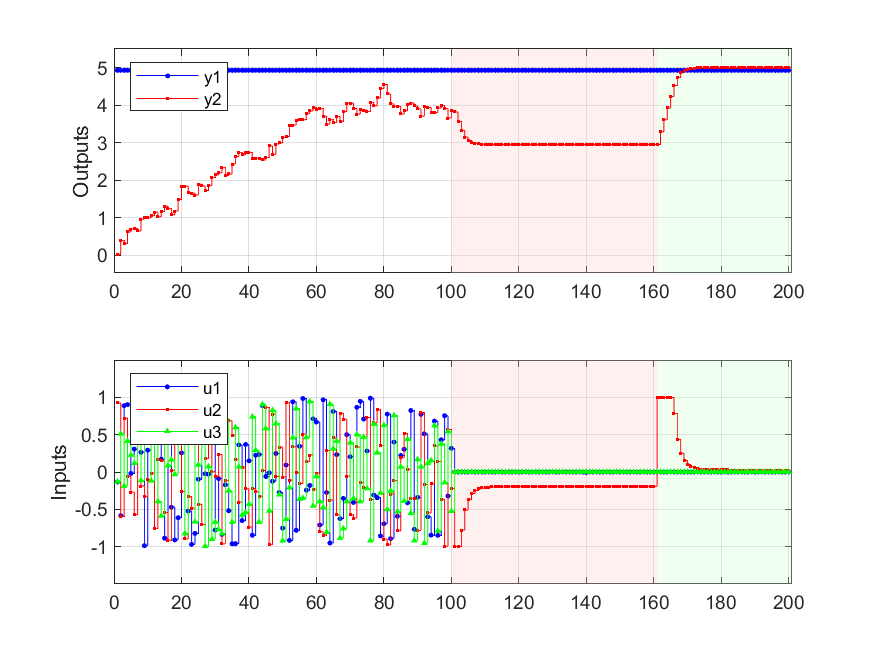} 
	\caption{Output trajectory of the water network via Online DeePC. At time $t = 101$, the transition from a random input signal to the optimal control input (red shaded), which steers the system's controllable output into the equilibrium $y^{s,1}$, can be seen. At time $t = 161$ (green shaded), the desired equilibrium is changed to $y^{s,2}$.}
	\label{fig:water_network_trajectories}
    \end{figure}

\section{Conclusion}
    This paper establishes a theoretical data-driven framework for analyzing and controlling discrete-time descriptor systems, eliminating the need for explicit state-space representations. The main contributions include i) the development of sufficient conditions for controllability (R-controllability and C-controllability) and observability (R-observability and C-observability) through the joint construction of forward and backward data matrices, which resolves the challenge of non-causality in descriptor systems, and ii) an extension of Willems’ fundamental lemma by relaxing the controllability and persistency of excitation assumptions, leading to a paradigm of online DeePC for uncontrollable descriptor systems. Simulations in power and water networks confirm that online DeePC enables controllable outputs to accurately track the reference despite the presence of uncontrollable outputs, demonstrating the theoretical efficacy of the proposed method in partially uncontrollable systems.



\section*{Appendix: Proof of Theorem \ref{thm:Willems}}
\label{app1}
    We start by proving the first statement using a double-inclusion argument.		
	
    It is obviously that 
	\begin{align*}
		z_2(k)=-\sum_{m=0}^{s-1}N^mB_2u(k+m) \in \mathcal{N}.
	\end{align*}	
    Using an argument similar to (\ref{5_1}), one can directly show that the left-hand side of (\ref{5_3}) is included in its right hand side. 
	
    To show the other direction, we show that the left kernel of the matrix
	\begin{equation}
		\begin{bmatrix}
			H_1(\bar{z}_{[0,T-L-s+1]})\\
			H_{L+s-1}(\bar{u}_{[0,T-1]})
		\end{bmatrix}
		\label{5_5}
	\end{equation}
	is orthogonal to $(\cal{R}+\cal{K}\begin{bmatrix}
		z_{10}
	\end{bmatrix})\times \mathcal{N}\times \mathbb{R}^{\text{m(L+s-1)}}$. To this end, let
	\begin{equation}
		v^{\top}=\begin{bmatrix}
			\zeta^{\top} & \beta^{\top} & \eta_1^{\top} & \dots & \eta_{L+s-1}^{\top}
		\end{bmatrix}
		\label{5_6}
	\end{equation}
    be an arbitrary row vector in the left kernel of the matrix (\ref{5_5}), where $\zeta \in \mathbb{R}^{q}$,$\beta \in \mathbb{R}^r$,$\eta_1,\eta_2,\dots,\eta_{L+s-1}\in \mathbb{R}^{m}$. Since $\delta \ge \delta_{\text{min}}$, we know there exists $\alpha_{0k},\alpha_{1k},\dots,\alpha_{\delta-1,k} \in \mathbb{R}$ such that
	\begin{equation}
		A_1^k+\sum_{j=0}^{\delta-1}\alpha_{jk}A_1^j=0_{q\times q},\forall k=\delta,\delta+1,\dots.
		\label{5_7}
	\end{equation}
    The above equation implies that $A_1^kB_1=-\sum_{j=0}^{\delta-1}\alpha_{jk}A_1^jB_1$ and $A_1^kz_{10}=-\sum_{j=0}^{\delta-1}\alpha_{jk}A_1^jz_{10}$ for all $k = \delta, \delta+1,\dots$. Therefore, in order to show the left kernel of the matrix (\ref{5_5}) is orthogonal to $(\cal{R}+\cal{K}\begin{bmatrix}
		z_{10}
	\end{bmatrix})\times \mathcal{N} \times \mathbb{R}^{\text{m(L+s-1)}}$, it suffices to show the following
	\begin{subequations} \label{plant}
		\begin{align}			\eta_1^{\top}=\eta_2^{\top}=\dots=\eta_{L+s-1}^{\top}=0_m^{\top},\label{5_8a}\\	\zeta^{\top}B_1=\zeta^{\top}A_1B_1=\dots=\zeta^{\top}A_1^{\delta-1}B_1=0_m^{\top},\label{5_8b}\\			\zeta^{\top}z_{10}=\zeta^{\top}A_1z_{10}=\dots=\zeta^{\top}A_1^{\delta-1}z_{10}=0,\label{5_8c}\\
			\beta^{\top}B_2 =\beta^{\top}NB_2=\dots= \beta^{\top}N^{s-1}B_2=0_m^{\top}.\label{5_8d}
		\end{align}
	\end{subequations}
    In order to prove (\ref{5_8a}), we let $\omega_0,\omega_1,\dots,\omega_{\delta}\in\mathbb{R}^{n+m(\delta+L+s-1)}$ be such that $\omega_0=\begin{bmatrix}
			\beta^{\top}& \zeta^{\top}& \eta_1^{\top}&\dots & \eta_{L+s-1}^{\top} & 0_{m\delta}^{\top}
	\end{bmatrix}^{\top}$ and $w_j$ equal\\ $\begin{bmatrix}
        \beta^{\top} & \zeta^{\top}A_1^j & \zeta^{\top}A_1^{j-1}B_1 & \dots & \zeta^{\top}B_1 & \eta_1^{\top} & \dots & \eta_{L+s-1}^{\top} & 0_{m(\delta-j)}^{\top}
	\end{bmatrix}^{\top}$, for $j=1,\dots,\delta$. Since $v^{\top}$ is the left kernel of matrix (\ref{5_5}), using (\ref{2_1}) one can verify that $\omega_0^{\top},\omega_1^{\top},\dots,\omega_{\delta}^{\top}$ are the left kernel of
	\begin{equation}
		\begin{bmatrix}
			H_1(\bar{z}_{2[0,T-L-s-\delta+1]})\\
			H_1(\bar{z}_{1[0,T-L-s-\delta+1]})\\
			H_{L+s+\delta-1}(\bar{u}_{[0,T-1]})
		\end{bmatrix}.
		\label{5_9}
	\end{equation}
	Let $\alpha_{\delta\delta}=1$ and $k=\delta$ in (\ref{5_7}), we have $\sum_{j=0}^{\delta}\alpha_{jk}A_1^j=0_{q\times q}$. Hence
	\begin{equation}
		\begin{aligned}
			\sum_{j = 0}^{\delta}\alpha_{j\delta}\omega_j^{\top}
			&=\begin{bmatrix}
				\sum_{j = 0}^{\delta}\alpha_{j\delta}\beta^{\top} & \sum_{j = 0}^{\delta}\alpha_{j\delta}\zeta^{\top}A_1^j & \gamma^{\top}
			\end{bmatrix}\\
			&=\begin{bmatrix}
				\sum_{j = 0}^{\delta}\alpha_{j\delta}\beta^{\top} & 0_q & \gamma^{\top}
			\end{bmatrix},
		\end{aligned}
		\label{5_10}
	\end{equation}
	for some vector $\gamma \in \mathbb{R}^{m(L+s+\delta-1)}$. Since row vectors $\omega_0^{\top},\omega_1^{\top},\dots,\omega_{\delta}^{\top}$ are the left kernel of matrix (\ref{5_9}), equation (\ref{5_10}) implies that $\gamma^{\top}$ is the left kernel of matrix
	\begin{equation}
		H_{L+s+\delta-1}(u_{[0,T-1]}).
		\label{5_11}
	\end{equation}
    Since $u_{[0,T-1]}$ is collectively persistently exciting of order $(L+\delta+s-1)$, matrix (\ref{5_11}) has full row rank. Therefore
	\begin{equation}
		\gamma = 0_{m(L+\delta+s-1)}.
		\label{5_12}
	\end{equation}
	Observe that the last $m$ entries of $\gamma$ are $\alpha_{\delta\delta}\eta_{L+s-1}=\eta_{L+s-1}$, hence equation (\ref{5_12}) implies that $\eta_{L+s-1}=0_{m}$. Then the last $2m$ entries of $\gamma$ are given by $\begin{bmatrix}
        \alpha_{\delta\delta}\eta_{L+s-2}^{\top}+\alpha_{(\delta-1)\delta}\eta_{L+s-1}^{\top} & \alpha_{\delta\delta}\eta_{L+s-1}^{\top}
	\end{bmatrix}$. Since $\alpha_{\delta\delta}=1$ and $\eta_{L+s-1}=0_{m}$ equation(\ref{5_12}) also implies that $\eta_{L+s-2}=0_{m}$. By repeating similar induction we can prove that (\ref{5_8a}) holds.	
    
    Next, since (\ref{5_8a}) holds, the first $m\delta$ entries in $\gamma$ are
	$$
        \begin{bmatrix}			          \sum_{j=1}^{\delta}\alpha_{j\delta}\zeta^{\top}A_1^{j-1}B_1 & \sum_{j=2}^{\delta}\alpha_{j\delta}\zeta^{\top}A_1^{j-2}B_1 & \dots & \alpha_{\delta\delta}\zeta^{\top}B_1
	\end{bmatrix}^{\top}.
	$$
    By combining this with (\ref{5_12}) we get that
	\begin{equation}
		0_m^{\top} = \sum_{j=k}^{\delta}\alpha_{j\delta}\zeta^{\top}A_1^{j-k}B_1,\forall k=1,\dots,\delta.
		\label{5_13}
	\end{equation}
    Since $\alpha_{\delta\delta}=1$, considering $k=\delta$ in (\ref{5_13}) implies that $\zeta^{\top}B_1=0_m$. Substitute this back into (\ref{5_12}) and considering $k=\delta-1$ implies that $\zeta^{\top}A_1B_1=0_m$. By repeating a similar reasoning for $k=\delta-2,\delta-3,\dots,1$ we can prove that (\ref{5_8b}) holds.
	
    Further, by using (\ref{5_7}) and (\ref{5_8b}) we can show that $\zeta^{\top}A_1^kB=0_m$ for all $k\ge1$. Combining this together with the fact that row vector (\ref{5_6}) is the left kernel of matrix (\ref{5_5}) and that (\ref{5_8b}) also holds, it follows that,
	\begin{align*}
		0&=\zeta^{\top}z_{1k}\\
		&=\zeta^{\top}(A_1^kz_{10}+\sum_{j=0}^{k-1}A_1^{k-j-1}B_1u_j)\\
		&=\zeta^{\top}A_1^kz_{10},\forall k=0,\dots,T-L-s+1.
	\end{align*}
    Since $T\ge\delta+L+s-1$ by assumption, we conclude that (\ref{5_8c}) holds.	
	
    Next, we will prove the (\ref{5_8d}). As $\omega_0$ is the left kernel of (\ref{5_9}), we can get $\beta^{\top}H_1(\bar{z}_{2[0,T-L-s-\delta+1]})$. It means that $$\beta^{\top} \begin{bmatrix}
    \bar{z}_{2}(0) & \bar{z}_{2}(1) & \dots & \bar{z}_{2}(T-L-s-\delta+1)
	\end{bmatrix}=0,$$
	$$
	-\beta^{\top} \begin{bmatrix}
			B_2 & NB_2 & N^2B_2 & \dots & N^{s-1}B_2
	\end{bmatrix}H_s(u_{[0,T-L-\delta]})
	=0,
	$$
	$$
	-\begin{bmatrix}
			\beta^{\top}B_2 & \beta^{\top}NB_2 & \beta^{\top}N^2B_2 & \dots & \beta^{\top}N^{s-1}B_2
	\end{bmatrix}H_{s}(u_{[0,T-L-\delta]})=0.
	$$		
    It can be seen that $H_{s}(u_{[0,T-L-\delta]})$ is constructed from the first $s$ rows of $H_{L+s+\delta-1}(u_{[0,T-1]})$, so $H_{s}(u_{[0,T-L-\delta]})$ is also full row rank. Thus, $\beta^{\top}B_2 =\beta^{\top}NB_2=\dots= \beta^{\top}N^{s-1}B_2=0$, (\ref{5_8d}) holds.
    
    We now prove the second statement. 	
	
    If $\{u_{[0,L-1]},z_{{[0,L-1]}},y_{[0,L-1]}\}$ is an input-state-output trajectory generated by (\ref{2_1}),	
	\begin{align*}
		\begin{bmatrix}
			u_{[0,L-1]}\\
			y_{[0,L-1]}
		\end{bmatrix}
		=\left[\begin{array}{c:c}
			0 & [I_{mL}\,\, 0]\\
			O_{sL} & T_L
		\end{array}\right]
		\begin{bmatrix}
			z_1(0)\\
			u_{[0,L+s-2]}
		\end{bmatrix}.
	\end{align*}	
    First, the necessity of the second statement is demonstrated. Given the input, state, and output trajectories, suppose (\ref{5_2}) holds.
    Let
	\begin{equation}
		\begin{aligned}
			\bar{z}_0=
			\begin{bmatrix}
				\bar{z}_{10}\\
				\bar{z}_{20}
			\end{bmatrix}=\begin{bmatrix}
				H_1(z_{1[0,T - L - s + 1]})\\
				H_1(z_{2[0,T - L - s + 1]})
			\end{bmatrix}g, g\in\mathbb{R}^{T - L - s + 2},\\
			\{
			\begin{array}{l}
				\bar{z}_1(t + 1)=A_1\bar{z}_1(t)+B_1\bar{u}(t)\\
				N\bar{z}_2(t + 1)=\bar{z}_2(t)+B_2\bar{u}(t)
			\end{array}, 0\le t \le L - 2.
		\end{aligned}
		\label{5_14}
	\end{equation}	
    Then from (\ref{5_3}) we know $\bar{z}_{0}$ satisfies (\ref{5_4}), and one can  verify that $(\bar{u}_{[0,L-1]},\bar{z}_{[0,L-1]},\bar{y}_{[0,L-1]})$ is indeed an input-state-output trajectory of system (\ref{2_1}).	
	
    Conversely, let $\bar{u}_{[0,L-1]},\bar{z}_{[0,L-1]}\bar{y}_{[0,L-1]}$ be an input-state-output trajectory of system (1) with $\bar{z}_{0}\in (\cal{R}+\cal{O}+\cal{K}\begin{bmatrix}
		z_{10}
	\end{bmatrix})\times \mathcal{N}$. Then there exists
	\begin{equation}
		\bar{z}_{10}^a\in\mathcal{R}+\mathcal{K}[z_{10}],\quad
		\bar{z}_{10}^{b}\in\mathcal{O},\quad
		\bar{z}_{20}\in\mathcal{N}
		\label{5_15}
	\end{equation}
    such that  $\bar{z}_{10}=\bar{z}_{10}^a+\bar{z}_{10}^b$ and
	\begin{equation}
		\begin{bmatrix}
			\bar{u}_{[0,L-1]}\\
			\bar{y}_{[0,L-1]}
		\end{bmatrix}=				\left[\begin{array}{c:c}
			0 & [I_{mL}\,\, 0]\\
			O_{sL} & T_L
		\end{array}\right]
		\begin{bmatrix}
			\bar{z}_{10}^a+	\bar{z}_{10}^b\\
			\bar{u}_{[0,L+s-2]}
		\end{bmatrix}.
		\label{5_16}
	\end{equation}	
    Further, using Cayley-Hamilton theorem, one can show that ${\cal O}\subset Ker(O_{sL})$ for any $L\in\mathbb{N}$. Hence (\ref{5_15}) implies
	\begin{equation}
		O_{sL}\bar{z}_{10}^b=0_{pL}.   
		\label{5_17}
	\end{equation}	
	Since $\bar{z}_{10}^a\in\mathcal{R}+\mathcal{K}[z_{10}]$ and $\bar{z}_{20} \in \mathcal{N}$, the first statement of Theorem \ref{thm:Willems} implies that there exists $g\in\mathbb{R}^{T-L-s+2}$ such that
	\begin{equation}
		\begin{bmatrix}
			\bar{z}_{10}^a\\
			\bar{u}_{[0,L+s-2]}
		\end{bmatrix}=
		\begin{bmatrix}
			H_1(z_{1[0,T-L-s+1]})\\
			H_{L+s-1}(u_{[0,T-1]})
		\end{bmatrix}g.
		\label{5_18}
	\end{equation}    
	Notice that
            \begin{equation}
		\left[\begin{array}{c:c}
			0 & [I_{mL}\,\, 0]\\
			O_{sL} & T_L    
		\end{array}\right]
		\begin{bmatrix}
			H_1(z_{1[0,T-L-s+1]})\\
			H_{L+s-1}(u_{[0,T-1]})
		\end{bmatrix}=
		\begin{bmatrix}
			H_L(u_{[0,T-s]})\\
			H_L(y_{[0,T-s]})\\	
		\end{bmatrix}.
		\label{5_19}
	\end{equation}    
         Substituting (\ref{5_17}), (\ref{5_18}) and (\ref{5_19}) into (\ref{5_16}) completes the proof.



 \bibliographystyle{elsarticle-num} 
 \bibliography{example.bib}






\end{document}